\newcommand{\racke}{R\"{a}cke\xspace}
\newcommand{\raecke}{R\"{a}cke\xspace}
\newcommand{\load}{\textnormal{load}}
\newcommand{\rload}{\textnormal{rload}}
\newcommand{\maxflow}{\textnormal{maxflow}}
\title{Approximating Flexible Graph Connectivity  \\
via \racke Tree based Rounding}
\author{
     Chandra Chekuri\thanks{Dept.\ of Computer Science, Univ.\ of Illinois,
    Urbana-Champaign, Urbana, IL 61801. \texttt{chekuri@illinois.edu}. Supported in part by NSF grants
     CCF-1910149 and CCF-1907937.}
   \and
  Rhea Jain\thanks{Dept.\ of Computer Science, Univ.\ of Illinois,
    Urbana-Champaign, Urbana, IL 61801. \texttt{rheaj3@illinois.edu}. Supported in part by NSF grant
    CCF-1910149.}}
\date{\today}
\begin{document}

\maketitle

\begin{abstract}
Flexible graph connectivity is a new network design model introduced
by Adjiashvili \cite{Adjiashvili13}. It has seen several recent
algorithmic advances
\cite{AdjiashviliHM20,AdjiashviliHM22,AdjiashviliHMS20,BoydCHI22,BansalCGI22,ChekuriJ22}. Despite
these, the approximability even in the 
setting of a single-pair $(s,t)$ is poorly understood. In
\cite{ChekuriJ22} we raised the question of whether there is
poly-logarithmic approximation for the survivable network design
version (Flex-SNDP) when the connectivity requirements are fixed
constants. In this paper, we adapt a powerful framework for
survivable network design recently developed by Chen, Laekhanukit,
Liao, and Zhang \cite{ChenLLZ22} to give an affirmative answer to the
question. The framework of \cite{ChenLLZ22} is based on \raecke trees
and group Steiner tree rounding. The algorithm and analysis also
establishes an upper bound on the integrality gap of an
LP relaxation for Flex-SNDP \cite{ChekuriJ22}.
\end{abstract}

\section{Introduction}
The Survivable Network Design Problem (SNDP) is an important problem in
combinatorial optimization that generalizes many well-known problems
related to connectivity, and is also motivated by practical problems
related to the design of fault-tolerant networks. The input to this
problem is an undirected graph $G=(V,E)$ with non-negative edge costs $c: E \to \R_+$ and an integer function $r: V
\times V \to \Z^+$ which specifies a connectivity requirement for each
node pair $(u,v)$. The goal is to find a minimum-cost subgraph $H$ of
$G$ such that $H$ has $r(u,v)$ connectivity for each pair $(u,v)$. Our
focus in this paper is on edge-connectivity requirements; the resulting
problem is referred to as EC-SNDP. VC-SNDP refers to the problem in
which each pair $(u,v)$ requires $r(u,v)$ vertex connectivity. 
EC-SNDP contains as special cases classical problems such
as $s$-$t$ shortest path, minimum spanning tree (MST), minimum
$k$-edge-connected subgraph ($k$-ECSS), Steiner tree, Steiner forest
and several others. It is NP-Hard and APX-Hard to approximate.
Jain's seminal $2$-approximation for EC-SNDP via iterated rounding \cite{Jain01}
is the currently the best known approximation ratio.

In this paper we are interested in a new network design model
suggested by Adjiashvili \cite{Adjiashvili13} for which there are
several recent developments
\cite{AdjiashviliHM20,AdjiashviliHM22,AdjiashviliHMS20,BoydCHI22,ChekuriJ22,BansalCGI22}.
In this model, the edge set $E$ is partitioned to \emph{safe} edges
$\calS$ and \emph{unsafe} edges $\calU$. Vertices $s, t \in V$ are
$(p,q)$-flex-connected
\footnote{We follow the terminology from our
recent work \cite{ChekuriJ22} that is influenced by
\cite{AdjiashviliHM22,BoydCHI22}.}
if $s$ and $t$ are
$p$-edge-connected after deleting any subset of at most $q$ unsafe
edges. The Flex-SNDP problem is the following: the input is a graph
$G=(V,E)$ with edge costs $c:E \rightarrow \mathbb{R}_+$, a partition
$\calU \uplus \calS$ of the edge set, and functions $p, q: V \times V
\to \Z^+$. The goal is to find a min-cost subgraph $H$ of $G$ such
that each $u, v \in V$ is $(p(u, v), q(u, v))$-flex-connected in
$H$. We denote by $(p,q)$-Flex-SNDP the special case where for each
vertex pair $u,v$, either $p(u,v) = q(u,v) = 0$ or $p(u,v) = p$,
$q(u,v) = q$. Note that if all edges are safe, i.e. $E = \calS$, then
$(p, q)$-flex-connectivity is the same as $p$-connectivity, and if all
edges are unsafe, i.e. $E = \calU$, then $(p, q)$-flex-connectivity is
the same as $(p+q)$-connectivity. Flex-SNDP thus generalizes EC-SNDP.

The work so far in flexible connectivity has been on two special
cases. The first is the spanning case, which requires
$(p,q)$-flex-connectivity for all pairs of vertices. This is the
$(p,q)$-FGC problem \cite{BoydCHI22}. The other is when
the requirement is for a single pair $(s,t)$
\cite{Adjiashvili13,AdjiashviliHMS20}.  Following \cite{ChekuriJ22} we
refer to this as $(p,q)$-Flex-ST. For $(p,q)$-FGC,  \cite{BoydCHI22}
obtained an $O(q \log n)$-approximation, and constant factor
approximations have been developed for small values of $p,q$
\cite{BoydCHI22,BansalCGI22,ChekuriJ22}. For $(p,q)$-Flex-ST,
the only non-trivial approximations known are for $(1,q)$-Flex-ST and
$(p,1)$-Flex-ST \cite{AdjiashviliHMS20} and $(2,2)$-Flex-ST
\cite{ChekuriJ22}. In fact, no non-trivial approximation is known even for
$(3,2)$-Flex-ST or $(2,3)$-Flex-ST. No non-trivial result is known for
$(2,2)$-Flex-SNDP. We refer the reader to \cite{BoydCHI22,ChekuriJ22,BansalCGI22} for
a more detailed description of existing results.

Adjiashvili et al.\ \cite{AdjiashviliHMS20} show that when $p$ is part of the input and
large, $(p,1)$-Flex-ST is NP-Hard to approximate to almost polynomial
factors. Thus $(p,q)$-Flex-SNDP is a harder problem than EC-SNDP. In
our earlier paper \cite{ChekuriJ22} we raised the following
question: does $(p,q)$-Flex-SNDP admit an approximation ratio of the
form $f(p,q)$ or $f(p,q)\text{polylog}(n)$ for for all fixed $p,q$
where $f$ is some integer valued function? \cite{ChekuriJ22}
formulated an LP relaxation which can be solved in $n^{O(q)}$-time and
a corresponding question on its integrality gap was also implicitly
raised. The known techniques for EC-SNDP and related problems rely
on the requirement function satisfying structural
properties such as skew-supermodularity and uncrossability. These properties are crucial in
primal-dual and iterated rounding based algorithms \cite{GoemansGPSTW94,Jain01}. Recent
work on flexible connectivity network design
\cite{BoydCHI22,BansalCGI22,ChekuriJ22} extended some of these ideas
in non-trivial and interesting ways to the special cases that we
mentioned. However, the requirement function for flexible connectivity
is not as well-behaved (see \cite{ChekuriJ22} for some examples) and
it seems challenging to obtain any non-trivial algorithm for say
$(2,2)$-Flex-SNDP or $(3,2)$-Flex-ST.

\paragraph{Contribution:} In this paper we take a substantially different approach for
$(p,q)$-Flex-SNDP, motivated by a very recent work of Chen,
Laekhanukit, Liao, and Zhang \cite{ChenLLZ22}. They developed a new
algorithmic approach for survivable network design to tackle a
generalization of EC-SNDP to the group/set connectivity setting.
We use their framework to obtain the following theorem.

\begin{theorem}
\label{thm:main}
There is a randomized algorithm that yields an $O(q(p+q)^3 \log^7
n)$-approximation for $(p,q)$-Flex-SNDP and runs in expected
$n^{O(q)}$-time. The approximation is based on an LP relaxation for the problem.
\end{theorem}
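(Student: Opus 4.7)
My plan is to adapt the \racke-tree plus group Steiner tree (GST) rounding framework of \cite{ChenLLZ22} from classical EC-SNDP to the flexible-connectivity setting.

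The first step is to solve the flex-cut LP from \cite{ChekuriJ22}: variables $x_e \in [0,1]$ for $e \in E$, and for every pair with requirement $(p,q)$, every separating cut $S$, and every $F \subseteq \calU \cap \delta(S)$ with $|F| \le q$, the constraint $\sum_{e \in \delta(S) \setminus F} x_e \ge p$. The LP separates in $n^{O(q)}$ time by enumerating the failure set $F$ and computing a min-cut on the residual capacities, accounting for the claimed running time. Let $x^*$ denote the optimum fractional solution. Then sample a hierarchical cut decomposition $T$ (a \racke tree) for the capacitated graph $(G, x^*)$. By \racke's theorem one obtains, with constant probability, that every cut in $G$ has $T$-capacity within an $O(\log n)$ factor of its $x^*$-capacity and conversely, so every flex-cut constraint carries over to $T$ up to an $O(\log n)$ loss.

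Next, reduce to GST on $T$. Following \cite{ChenLLZ22}, decompose the connectivity requirement into $p+q$ ``edge-disjoint bundles'' per pair: iteratively, at step $i$, solve a GST instance on $T$ whose groups encode the next tree-edge required to cross each root-to-pair segment, and round the fractional GST solution via Garg--Konjevod--Ravi. There are up to $n^{O(q)}$ groups (one per demand pair and per failure pattern $F$), so each GST rounding loses $O(\log^2 n \cdot \log k) = O(q \log^3 n)$. The $(p,q)$-flex requirement, unlike $k$-edge-connectivity, is not skew-supermodular/uncrossable \cite{ChekuriJ22}, which obstructs a direct application of the CLLZ augmentation argument. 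The key technical work is to maintain an invariant that for every pair and every $q$-subset $F$ of unsafe edges, the residual LP still supports an augmenting bundle on the \racke tree, and to charge any deficit back to $x^*$ via the original flex-cut constraints.

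Finally, tally the losses: $O(\log n)$ from \racke, $O(q\log^3 n)$ per bundle from GST rounding, $(p+q)$ bundles, an $O((p+q)^2)$ overhead from the layered augmentation analysis, and an $O(\log^3 n)$ factor absorbing boosting of \racke's success probability together with a union bound over the $n^{O(1)}$ demands, yielding the claimed $O(q(p+q)^3 \log^7 n)$ ratio. Because every comparison is made against the LP value, the same bound is also an upper bound on the integrality gap of the LP of \cite{ChekuriJ22}, matching the remark preceding Theorem~\ref{thm:main}. The hard part I foresee is Step three: handling the non-uncrossable flex requirement inside the augmentation loop while ensuring that an adversary's subsequent choice of $q$ unsafe failures is already priced in at every level of the \racke tree. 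I expect this to require an enriched group structure indexed jointly by pairs and failure patterns, and a residual-LP argument tailored to flex rather than one that invokes standard uncrossability.
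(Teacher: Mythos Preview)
Your outline has the right high-level ingredients (\racke trees plus GST/Set Connectivity rounding, as in \cite{ChenLLZ22}), but it is missing the concrete structural choices that make the argument go through, and your proposed route differs from the paper in ways that leave real gaps. First, the paper does \emph{not} round the flex-cut LP directly. It uses a $q$-stage augmentation: start with a solution to $(p,0)$-flex-connectivity (EC-SNDP), then in stage $\ell$ augment from $(p,\ell)$ to $(p,\ell+1)$ by covering all ``violated'' cuts (those with exactly $p+\ell$ crossing edges and fewer than $p$ safe ones). This is what produces the outer factor $q$, and, crucially, inside each stage the augmentation problem is plain cut-covering with \emph{no} safe/unsafe distinction --- so there is no need to index groups by failure patterns as you propose. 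Your ``$p+q$ edge-disjoint bundles'' decomposition and groups indexed by $(\text{pair},F)$ are not what the paper does, and you have not explained how a bundle-by-bundle augmentation would certify the disjunctive requirement ``$\ge p$ safe \emph{or} $\ge p+q$ total'' on every cut.

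Second, two technical ideas are absent from your plan and without them the analysis breaks. (i) The capacities fed to the \racke construction are not $x^*$: already-bought edges in $H$ (and LARGE edges) are assigned capacity exactly $\frac{1}{4(p+q)\beta}$. This is what lets you apply Markov to show that for a violating edge set $F\subset H$ of size $p+q$, a random tree is ``good'' (i.e., $y(\calM^{-1}(F))\le \tfrac12$) with probability $\ge \tfrac12$. With raw $x^*$-capacities on $H$ you have no such control. (ii) Even on a good tree, deleting $\calM^{-1}(F)$ can disconnect terminals in $\calT$ that are connected in $G$. The paper handles this via the \emph{shattered components} argument: look at components of $H\setminus F$, define a Set Connectivity instance $Z_F$ on $\calT$ whose pairs are partitions of the shattered components together with a terminal pair, show (a) each such pair has flow $\ge \frac{1}{4(p+q)\beta}$ in $\calT\setminus\calM^{-1}(F)$, and (b) $|Z_F|\le 2^{2(p+q)\beta}k$. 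It is this bound on $|Z_F|$, not a bound coming from enumerating failure patterns, that drives the inner iteration count $t=O((p+q)\log n)$; combined with $t'=O((p+q)\log n)$ outer tree samples and the $1/\alpha=O((p+q)\log n)$ blow-up from tree rounding, this yields the $(p+q)^3\log^7 n$ per stage. Your accounting arrives at the same exponents, but the sources you cite for them (``$(p+q)$ bundles'', ``$(p+q)^2$ layered overhead'', $n^{O(q)}$ groups) do not match the actual mechanism, and your Step~3 as written would require new ideas beyond what you have sketched.
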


\begin{remark}
  The algorithm for $(p,q)$-Flex-SNDP easily extends to the setting where
  the maximum connectivity requirement is dominated by $(p,q)$. 
\end{remark}

The preceding theorem sheds light on the approximability of the
problem --- as discussed, this has been challenging via past
techniques.  It suggests that there may be an $f(p,q)$-approximation
for $(p,q)$-Flex-SNDP via the LP relaxation. It also showcases
the generality of the approach in \cite{ChenLLZ22} which
is likely to have further impact in network design. For instance,
the algorithm and analysis extend to the Set Connectivity version of
flexible connectivity problem.

\subsection{Technical Overview}
We give a brief technical overview of the algorithm. We follow an
augmentation approach for $(p,q)$-Flex-SNDP following recent work
\cite{BoydCHI22,BansalCGI22,ChekuriJ22}. The idea is to start with a
subgraph $H_0$ that satisfies $(p,0)$-flex-connectivity for the given
instance (which can be solved via reduction to EC-SNDP) and
iteratively increase, in $q$ stages, to obtain a subgraph $H_q$ that
satisfies $(p,q)$-flex-connectivity. In stage $\ell$ the goal is to go
from $(p,\ell)$-flex-connectivity to
$(p,\ell+1)$-flex-connectivity. Call a set $S \subset V$ deficient in
stage $\ell$ if it separates some terminal pair and has the following
property: $|\delta_{H_{\ell}}(S)| = p+\ell$ and $|\delta_{H_{\ell}}(S)
\cap \calS| < p$ (has less than $p$ safe edges).  It is necessary and
sufficient to cover all the deficient cuts by any edge in
$E(G)-E(H_\ell)$ to increase to $(p,\ell+1)$-flex-connectivity. Thus,
the augmentation problem can ignore the distinction between safe and
unsafe edges.  The family of deficient cuts in this augmentation
problem, unfortunately, does not have nice properties such as
uncrossability except in some special cases. Instead, we rely on the
framework of \cite{ChenLLZ22} --- we start with a fractional solution
to a natural cut covering relaxation that can be solved in $n^{O(q)}$
time, and round it via their approach. The algorithm in
\cite{ChenLLZ22} consists of three main ingredients. The first is to
use the fractional solution $x$ to define a capacitated graph $G'$ in
a clever way. The second is to consider a probabilistic approximation
of $G'$ via capacitated trees that approximate the flow properties of
$G'$ as defined in the seminal work of \racke \cite{Racke08}. The
third is a dependent randomized rounding procedure on trees for the
group Steiner tree problem due to Garg, Konjevod and Ravi
\cite{GargKR98}; this rounding has been generalized to the Set
Connectivity problem that we will need \cite{ChekuriEGS11,CGL15}. The
overall algorithm of \cite{ChenLLZ22} is simple at a high-level. After
it sets up the graph $G'$, it repeatedly samples a tree from the
\racke tree distribution induced by $G'$, and does a randomized
oblivious Set Connectivity rounding on the sampled tree. We use the
same algorithm with relevant changes to the definition of capacities
of $G'$ that are tailored to flexible connectivity. The analysis in
\cite{ChenLLZ22} shows that the rounding procedure covers the
deficient cuts in only a polylogarithmic number of rounds, via a
clever argument. We adapt their analysis to show that it also works
for flexible connectivity.  An important comment is the following: the
algorithm in \cite{ChenLLZ22} was designed to address group/set
connectivity where the complexity comes from large groups (otherwise
it can be reduced to EC-SNDP). Even though we are solving a non-group
problem in $(p,q)$-Flex-SNDP, the algorithm and analysis shows that
ideas from set connectivity implicitly arise when one tries to connect
different components in the process of covering deficient cuts.

\paragraph{Organization:} Section~\ref{sec:prelim} sets up the relevant background on
the LP relaxation, \racke tree embeddings, group Steiner tree, Set Connectivity and
the tree rounding algorithm for them. Section~\ref{sec:algo} describes the rounding algorithm.
Section~\ref{sec:analysis} analyses the correctness and approximation ratio of the algorithm.

\section{Preliminaries, Augmentation LP, and Background}
\label{sec:prelim}

Let $G=(V,E)$ be a graph with edge capacities $x: E
\rightarrow \mathbb{R}_+$, and let $S,T$ be two disjoint vertex
subsets. We say that $x$ supports a flow of value $f$ between $A$ and
$B$ if, in the graph $G'$ obtained by shrinking $S$ to $s$ and $T$ to
$t$, the max $s$-$t$ flow is at least $f$.

\subsection{LP Relaxation for Augmentation}
Recall the definition of the $(p,q)$-Flex-SNDP problem: given a graph
$G = (V, E)$ with edge costs $c(e)$, a partition $\calU \uplus \calS$
on the edge set, and terminal pairs $(s_i, t_i) \in V \times V$, $i
\in [k]$, find the cheapest subgraph such that each terminal pair
$(s_i, t_i)$ is $(p,q)$-flex-connected. Equivalently, any cut $\delta(S)$
separating a terminal pair must have at least $p$ safe edges or at least $p +
q$ total edges. One can verify the feasibility of a solution in 
$n^{O(q)}$-time: for each subset of $q$ unsafe edges, remove them
and test for $p$-connectivity between terminal pairs.
We employ the augmentation methodology. Suppose we are given a partial solution $H
\subseteq E$ that satisfies $(p,q-1)$-flex-connectivity for the given
terminal pairs. We call a cut $S$ \emph{violated} with respect to $H$
if $|S \cap \{s_i,t_i\}| = 1$ for some $i$, and $|\delta_H(S) \cap \calU| < p$
  and $|\delta_H(S)| = p+q-1$. Note that any cut separating a
terminal pair in $H$ has at least $p$ safe edges or at least $p+q-1$ total
edges. Let $\calC = \{S \subseteq V:
S \text{ is violated}\}$.  We
can augment $H$ to obtain a feasible solution to $(p,q)$-Flex-SNDP
instance by covering all cuts in $\calC$.  This naturally leads to a cut-based
LP relaxation for the augmentation problem with variables $x_e \in
[0,1]$ for $e \in E \setminus H$:

\begin{align*}
    \min \sum_{e \in E \setminus H} c(e)x_e& & \text{\bf Augment-LP} \\
    \text{subject to } \sum_{e \in \delta_{E \setminus H}(S)} x_e &\geq 1 &\forall S \in \calC  \\
    x_e \in [0, 1] & & e \in E \setminus H
\end{align*}

\begin{claim}
  Augment-LP admits an $n^{O(q)}$-time separation oracle and hence can be solved in polynomial time for each fixed $q$.
\end{claim}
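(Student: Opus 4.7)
The plan is to apply Karger's bound on the number of near-minimum cuts in an undirected graph, using the structural fact that $H$ is already $(p,q{-}1)$-flex-connected. Because $H$ is $(p,q{-}1)$-flex-connected, for every terminal pair $(s_i,t_i)$ the minimum $s_i$-$t_i$ edge cut of $H$ has value at least $p$ (take the choice of deleting zero unsafe edges in the definition of flex-connectivity). Meanwhile every $S \in \calC$ satisfies $|\delta_H(S)| = p+q-1$, so $S$ is an $\alpha$-approximate min $s_i$-$t_i$ cut of $H$ for $\alpha = (p+q-1)/p \le q$.

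For each of the $O(n^2)$ terminal pairs $(s_i,t_i)$, I would enumerate all $s_i$-$t_i$ cuts of $H$ of value at most $p+q-1$. By Karger's theorem, together with the standard randomized contraction enumeration that never merges $s_i$ with $t_i$, there are $O(n^{2\alpha}) = n^{O(q)}$ such cuts and they can all be produced in $n^{O(q)}$ time with high probability (a deterministic variant via Gabow-style enumeration of near-minimum $s$-$t$ cuts would also work). For each enumerated cut $S$, I would then check in polynomial time whether $|\delta_H(S) \cap \calS| < p$, which places $S$ in $\calC$, and whether $\sum_{e \in \delta_{E \setminus H}(S)} x_e < 1$, which certifies that the corresponding Augment-LP constraint is violated at $x$. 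If some $S$ passes both tests, return it as a separating inequality; otherwise $x$ is feasible. Plugging this oracle into the ellipsoid method yields the $n^{O(q)}$-time algorithm for Augment-LP.

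The main conceptual ingredient is the floor $c \ge p$ on the min $s_i$-$t_i$ cut of $H$, which turns the combinatorial bound $p+q-1$ on $|\delta_H(S)|$ into the polynomial enumeration size $n^{O(q)}$; without this lower bound on the min cut the candidate list could be super-polynomial even for fixed $q$. The one step that requires even mild care is noting that Karger's $O(n^{2\alpha})$ bound applies to $s$-$t$ min cuts (and not only to global min cuts), which follows from the identical contraction analysis with the source--sink pair protected throughout.
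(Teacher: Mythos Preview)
Your argument has a genuine gap: Karger's $O(n^{2\alpha})$ bound on near-minimum cuts is with respect to the \emph{global} minimum cut, and it does \emph{not} extend to $s$-$t$ cuts by ``protecting the source--sink pair.'' The contraction analysis relies on every vertex having degree at least the min-cut value $c$, which forces $\ge kc/2$ edges when $k$ vertices remain; for $s$-$t$ cuts this lower bound on degrees need not hold, since the global minimum cut of $H$ can be far smaller than the $s_i$-$t_i$ minimum cut. Concretely, take $p=n$, $q=1$, all edges unsafe, and let $H$ consist of $n$ internally disjoint length-$2$ paths from $s$ to $t$. Then $H$ is $(n,0)$-flex-connected for the pair $(s,t)$, every $S\in\calC$ has $|\delta_H(S)|=n=p+q-1$, and there are $2^n$ such cuts (choose, for each path, which side the middle vertex lies on). Your claimed enumeration size is $n^{O(q)}=n^{O(1)}$, which is impossible here; equivalently, if each of the $2^n$ min $s$-$t$ cuts survived protected contraction with probability $\Omega(n^{-2})$, the expected number of survivors would exceed $1$, contradicting that only one bipartition remains.

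The intended argument, hinted at just before the claim in the paper, avoids enumerating cuts altogether: enumerate the $n^{O(q)}$ subsets $F'$ of $q$ unsafe edges of $H$, and for each $F'$ and each terminal pair compute a minimum $s_i$-$t_i$ cut in the capacitated graph that assigns capacity $1$ to edges of $H\setminus F'$ and capacity $x_e$ to edges of $E\setminus H$. If some such min cut has value $<p$, the minimizing set $S$ satisfies $|\delta_{H\setminus F'}(S)|\le p-1$, hence (using that $F'\subseteq\calU$ and $H$ is $(p,q{-}1)$-flex-connected) $|\delta_H(S)\cap\calS|<p$ and $|\delta_H(S)|=p+q-1$, so $S\in\calC$ and $x(\delta_{E\setminus H}(S))<1$. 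Conversely, any violated constraint is witnessed by choosing $F'$ to be $q$ unsafe edges in $\delta_H(S)$. This gives an $n^{O(q)}$-time separation oracle without any appeal to cut-counting.
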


Recall that $(p, 0)$-Flex-SNDP is equivalent to EC-SNDP with all
terminal pairs having requirement $p$. We can obtain a
$2$-approximate feasible solution. For $\ell=0$ to $q-1$ we solve an
augmentation problem in each stage to go from $(p,\ell)$ to $(p,\ell+1)$
flex-connectivity. Thus an $\alpha$-approximation for the augmentation
problem implies an overall $(2 + q\alpha)$-approximation for $(p,
q)$-Flex-SNDP.

\begin{remark}
There is an LP relaxation for $(p,q)$-Flex-SNDP with the property that
a feasible fractional solution to it is also feasible for Augment-LP
for each stage \cite{ChekuriJ22}. Thus, proving an integrality gap
bound for Augment-LP gives upper bounds on the integrality gap of the
LP for $(p,q)$-Flex-SNDP.
\end{remark}

\subsection{\racke Tree Embeddings}
The results in this paper use \racke's capacity-based probabilistic
tree embeddings. We borrow the notation from~\cite{ChenLLZ22}. Given $G = (V, E)$ with capacity $x: E \to \R^+$
on the edges, a capacitated tree embedding of $G$ is a tree $\calT$,
along with two mapping functions $\calM_1: V(\calT) \rightarrow V(G)$
and $\calM_2: E(\calT) \rightarrow 2^{E(G)}$ that satisfy some
conditions. $\calM_1$ maps each vertex in $\calT$ to a vertex in $G$,
and has the additional property that it gives a one-to-one mapping
between the leaves of $\calT$ and the vertices of $G$.  $\calM_2$ maps
each edge $(a,b) \in E(\calT)$ to a path in $G$ between $\calM_1(a)$
and $\calM_1(b)$.  For notational convenience we view the two mappings
as a combined mapping $\calM$. For a vertex $u \in V(G)$ we use
$\calM^{-1}(u)$ to denote the leaf in $\calT$ that is mapped to $u$ by
$\calM_1$. For an edge $e \in E(G)$ we use $\calM^{-1}(e) = \{f \in
E(\calT)\mid e \in \calM_2(f)\}$. It is sometimes convenient to
view a subset $S \subseteq V(G)$ both as vertices in $G$ and
also corresponding leaves of $\calT$.

The mapping $\calM$ induces a capacity function $y: E(\calT)
\rightarrow \mathbb{R}_+$ as follows. Consider $f=(a,b) \in
E(\calT)$. $\calT - f$ induces a partition $(A,B)$ of $V(T)$ which in
turn induces a partition/cut $(A',B')$ of $V(G)$ via the mapping
$\calM$: $A'$ is the set of vertices in $G$ that correspond to the
leaves in $A$ and similarly $B'$. We then set $y(f) = \sum_{e \in
  \delta(A')} x(e)$, in other words $y(f)$ is the capacity of cut
$(A',B')$ in $G$. The mapping also induces loads on the edges of
$G$. For each edge $e \in G$, we let $\load(e) = \sum_{f \in E(\calT):
  e \in M(f)} y(f)$.  The relative load or \emph{congestion} of $e$ is
$\rload(e) = \load(e)/x(e)$.  The congestion of $G$ with respect to a
tree embedding $(\calT, \calM)$ is defined as $\max_{e \in E(G)}
\rload(e)$. Given a probabilistic distribution $\calD$ on trees
embeddings of $(G,x)$ we let
$$\beta_{\calD} = \max_{e \in E(G)}  \E_{(\calT,\calM) \sim \calD} \rload(e)$$
denote the maximum expected congestion.
\racke showed the following fundamental result on probabilistic embeddings of a capacitated graph into trees.

\begin{theorem}[\cite{Racke08}]
\label{thm:racke}
Given a graph $G$ and $x: E(G) \to \R^+$, there exists a probability
distribution $\calD$ on tree embeddings such that $\beta_{\calD} =
O(\log |V(G)|)$.  All trees in the support of $\calD$ have height at
most $O(\log(nC))$, where $C$ is the ratio of the largest to smallest
capacity in $x$. Moreover, there is randomized polynomial-time
algorithm that can sample a tree from the distribution $\calD$.
\end{theorem}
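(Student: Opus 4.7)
The plan is to prove \racke's theorem via a minimax reduction from the probabilistic statement to a deterministic one, followed by a hierarchical cut-based construction. By LP duality applied to the bilinear game where the algorithm chooses a tree embedding and the adversary chooses an edge weighting $w : E(G) \to \R_+$ with $\sum_e w(e) = 1$, establishing $\beta_{\calD} = O(\log n)$ for some distribution $\calD$ is equivalent to showing that for every such $w$ there exists a \emph{single} tree embedding $(\calT,\calM)$ satisfying $\sum_e w(e)\,\rload(e) = O(\log n)$. This collapses the probabilistic claim to a purely structural one about one tree against a worst-case edge weighting.

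Second, I would construct such a single tree by recursive sparse-cut decomposition tailored to the pair $(w,x)$. The natural sparsity of a cut $(S,\bar S)$ is roughly $\sum_{e \in \delta(S)} w(e) / \sum_{e \in \delta(S)} x(e)$, adjusted to favor balance. Recursively partitioning $V(G)$ with approximately sparsest balanced cuts yields a laminar family; the associated tree $\calT$ takes these clusters as internal nodes and vertices of $G$ as leaves, with $\calM_1$ mapping each leaf to the corresponding vertex and $\calM_2$ routing each tree edge along a short path in $G$ through chosen cluster representatives. A telescoping charge along root-to-leaf paths of $\calT$ shows that each edge $e$ of $G$ incurs weighted relative load $O(\log n)\cdot w(e)$ in total, provided one uses an $O(1)$-approximation (in a suitable LP-relative sense) to the sparsest-cut objective, obtainable from the $O(\log n)$-approximations of Leighton--Rao or Arora--Rao--Vazirani by appropriate reweighting.

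Third, to bound the height by $O(\log(nC))$, choose the recursive cuts balanced enough that each level shrinks a natural potential (combining vertex count and total capacity) by a constant factor; the $\log C$ term appears because capacities at different scales must be processed in separate phases to prevent pathological recursions. Finally, to realize $\calD$ algorithmically, run multiplicative weights: repeatedly invoke the single-tree construction against the current weighting $w$, up-weight edges that are overcongested in the returned tree, and average; standard regret bounds give a polynomial-time sampler meeting the stated guarantees.

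The hard part is the single-tree step: designing the recursive decomposition so that approximation ratio, balance, and height all combine to give $O(\log n)$ congestion together with $O(\log(nC))$ depth. \racke's [2008] contribution marries cut-sparsifier ideas with FRT-style region growing so that an $O(1)$-approximate oracle for a carefully defined sparsity measure on a derived instance suffices for the tight $O(\log n)$ bound, improving the earlier $O(\log^2 n)$ and $O(\log^3 n)$ distortions from Bartal and from \racke's own prior work.
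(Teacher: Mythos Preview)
The paper does not prove this theorem. Theorem~\ref{thm:racke} is stated as a black-box result from \racke~\cite{Racke08} in the preliminaries section, with no accompanying proof or proof sketch; the paper simply invokes it to obtain the tree distribution used in the rounding algorithm. So there is nothing in the paper's own text to compare your proposal against.

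That said, your sketch is a reasonable high-level outline of the ideas in \racke's original 2008 paper: the minimax/LP-duality reduction to a single-tree-against-adversarial-weights problem, followed by a hierarchical decomposition and multiplicative weights to extract the distribution, is indeed the skeleton of the argument. One caution: your description of the single-tree step as ``recursive sparse-cut decomposition'' with an $O(1)$-approximate sparsest-cut oracle is somewhat loose. \racke's actual construction proceeds by reducing to an FRT-style probabilistic metric embedding on a carefully chosen metric derived from the capacities and the adversarial weights; the $O(\log n)$ bound comes directly from the FRT distortion guarantee rather than from a sparsest-cut approximation ratio per se. Your telescoping-charge intuition is in the right spirit, but if you were to flesh this out you would want to follow the FRT-reduction route rather than attempt a direct sparse-cut recursion, which would more naturally recover the earlier $O(\log^2 n)$ bounds.
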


In the rest of the paper we use $\beta$ to denote the guarantee
provided by the preceding theorem where $\beta =
O(\log n)$ for a graph on $n$ nodes.

\paragraph{Implication for flows:} The original motivation for capacitated tree embeddings
is oblivious routing of multicommodity flows. A multicommodity flow
instance in a graph $G=(V,E)$ is specified by a demand matrix $D: V
\times V \rightarrow \mathbb{R}_+$ and the goal is to simultaneously
route $D(u,v)$ amount of flow between $u$ and $v$ for each vertex pair
$(u,v)$. We say that $D$ is routable in $G$ with congestion $\alpha$
if there is a feasible multicommodity flow in $G$ that satisfies all
demands such that the total flow on each edge $e$ is at most $\alpha
\cdot x(e)$ (it is routable if $\alpha \le 1$). It can be seen that
given a tree $(\calT,\calM)$ in $\calD$, any multicommodity flow that
can be routed in $G$ with capacities $x$ can also be routed in $\calT$
with capacities given by $y$ with congestion $1$ --- this is because
cut-condition is necessary for routing and in trees it is also
sufficient. Moreover, any routable multicommodity flow in $\calT$ with
demands only between leaves can be routed in $G$ with congestion
$\max_{e \in E(G)} \rload(e)$. The mapping of the routing in $\calT$
to $G$ is simple and follows the paths given by $\calM$. The
implication of this connection is the following corollary where we use
$\maxflow_H^z(A,B)$ to denote the maxflow between two disjoint vertex
subsets $A,B$ in a capacitated graph $H$ with capacities given by $z:
E(H) \rightarrow \mathbb{R}_+$.

\begin{corollary}
\label{cor:racketreeflow}
Let $\calD$ be the distribution guaranteed in Theorem~\ref{thm:racke}.
Let $A, B \in V(G)$ be two disjoint sets. Then (i) for any tree
$(\calT,\calM)$ in $\calD$, $\maxflow_G^x(A, B) \leq
\maxflow_\calT^y(\calM^-(A), \calM^-(B))$ and (ii) $\frac 1 \beta
\E_{(\calT,\calM) \sim \calD}[\maxflow_\calT^y(\calM^-(A),
  \calM^-(B))] \leq \maxflow_G^x(A, B)$.
\end{corollary}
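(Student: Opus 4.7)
The plan is to prove both inequalities by constructing explicit flow mappings between $G$ and a sampled tree $(\calT,\calM)\sim\calD$, invoking the cut--flow correspondence induced by $\calM$. For (i), I will take any optimum $A$--$B$ flow $\phi$ in $G$ of value $F := \maxflow_G^x(A,B)$, decompose it into flow paths, and reroute each unit of flow along the unique tree path between the leaves corresponding to the path's endpoints in $G$. For any tree edge $f$, removing $f$ induces a vertex partition $(A',B')$ of $V(G)$ whose $G$-capacity is $y(f)$ by definition; the total tree flow across $f$ equals the amount of $\phi$ whose endpoints lie on opposite sides of $(A',B')$, which is at most the $\phi$-mass on edges of $\delta_G(A')$ and hence at most $y(f)$. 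This shows the tree routing is feasible and yields $\maxflow_\calT^y(\calM^{-1}(A),\calM^{-1}(B))\ge F$.

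For (ii), I will map in the opposite direction, being careful to avoid the loose bound $F_\calT \le \max_e \rload(e)\cdot F$. Fix $(\calT,\calM)\sim\calD$, let $\psi$ be a maximum $\calM^{-1}(A)$--$\calM^{-1}(B)$ flow in $\calT$ under $y$, of value $F_\calT$, and decompose it into leaf-to-leaf paths. For each such tree path from $a$ to $b$, replace every tree edge $f$ on it by the $G$-path $\calM_2(f)$ and concatenate the resulting $G$-paths (consecutive $G$-paths match endpoints through $\calM_1$). This yields a flow $\phi'$ in $G$ from $A$ to $B$ of value $F_\calT$ whose mass on any $e\in E(G)$ satisfies $\phi'(e) = \sum_{f:\,e\in\calM_2(f)}\psi(f) \le \sum_{f:\,e\in\calM_2(f)} y(f) = \load(e)$. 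Next, fix a minimum $A$--$B$ cut $(S,\bar S)$ in $G$, so that $\sum_{e\in\delta_G(S)} x(e)=F$. Since the value of any $A$--$B$ flow in $G$ is bounded by its mass across $\delta_G(S)$, I get $F_\calT \le \sum_{e\in\delta_G(S)} \load(e)$. Taking expectations and using $\E_\calD[\load(e)] = x(e)\cdot\E_\calD[\rload(e)] \le \beta\,x(e)$ edge by edge then gives $\E_\calD[F_\calT] \le \beta \sum_{e\in\delta_G(S)} x(e) = \beta F$, which is (ii).

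The only nontrivial point --- and what I expect to be the main obstacle --- is precisely the need to collapse the ``$\max_e$'' in the naive per-tree bound into a fixed sum over the min-cut $\delta_G(S)$ before taking expectations. Applying the general tree-to-graph routing directly would give $F_\calT \le \max_e \rload(e)\cdot F$, whose expectation can be much larger than $\beta F$ since $\beta$ controls only the edgewise expected congestion. Projecting the rerouted flow $\phi'$ onto a specific min-cut linearizes the bound over the edges of $\delta_G(S)$, so that the edgewise guarantee $\E_\calD[\rload(e)] \le \beta$ from Theorem~\ref{thm:racke} can be applied pointwise and aggregated by linearity of expectation to produce the required max-flow bound.
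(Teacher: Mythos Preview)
Your argument is correct and aligns with what the paper intends: the corollary is stated without an explicit proof, as a consequence of the preceding paragraph on mapping flows between $G$ and $(\calT,\calM)$ via $\calM$. Your treatment of (ii) is in fact more careful than the paper's discussion, which only records the per-tree statement that a flow routable in $\calT$ can be routed in $G$ with congestion $\max_e \rload(e)$; as you observe, taking expectations of that bound does not yield $\beta\cdot\maxflow_G^x(A,B)$ since $\beta=\max_e \E[\rload(e)]$ rather than $\E[\max_e \rload(e)]$, and your step of projecting the rerouted flow onto a fixed minimum $A$--$B$ cut in $G$ before averaging is precisely the missing detail that makes the edgewise guarantee of Theorem~\ref{thm:racke} applicable.
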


\subsection{Group Steiner Tree, Set Connectivity and Tree Rounding}
The group Steiner tree problem was introduced in \cite{ReichW89} and
studied in approximation by Garg, Konjevod and Ravi
\cite{GargKR98}. The input is an edge-weighted graph $G=(V,E)$, a root
vertex $r \in V$, and $k$ groups $S_1,S_2,\ldots,S_k$ where each $S_i
\subseteq V$. The goal is to find a min-weight subgraph $H$ of $G$
such there is a path in $H$ from $r$ to each group $S_i$ (that is, to
some vertex in $S_i$). The approximability of this problem has
attracted substantial attention.  Garg et al.\ \cite{GargKR98}
described a randomized algorithm to round a fractional solution to a
cut-based LP relaxation when $G$ is a tree --- it achieves a $O(\log n
\log k)$-approximation. This has been shown to be essentially tight
from both an integrality gap and a hardness point of view
\cite{HalperinKKSW07,HalperinK03}. Their algorithm also yields an
$O(\log^2 n \log k)$-approximation in general graphs via
embeddings into tree metrics \cite{Bartal98,FRT03}. Better approximation in quasi-polynomial time
are known \cite{ChekuriP05,GrandoniLL19,GhugeN22}.

Set Connectivity is a generalization of group Steiner tree problem. Here we are given pairs of sets
$(S_1,T_1),(S_2,T_2),\ldots,(S_k,T_k)$ and the goal is to find a
min-cost subgraph $H$ such that there is an $(S_i,T_i)$ path in $H$
for each $i$.
Chekuri et al.\ \cite{ChekuriEGS11} obtained a
poly-logarithmic approximation and integrality gap by generalizing the
ideas from group Steiner tree.
Chalermsook, Grandoni and Laekhanukit \cite{CGL15} studied Survivable
Set Connectivity problem, motivated by earlier work in
\cite{GuptaKR10}. Here each pair $(S_i,T_i)$ has a connectivity
requirement $r_i$ which implies that one seeks $r_i$ edge-disjoint
paths between $S_i$ and $T_i$ in the chosen subgraph $H$; \cite{CGL15}
obtained a bicriteria-approximation via \racke tree and group Steiner tree rounding.
The recent work of Chen et al \cite{ChenLLZ22} uses related but more sophisticated ideas
to obtain the first true approximation for this problem. They refer to
the problem as Group Connectivity problem and obtain an $O(r^3 \log
r \log^7 n)$-approximation where $r = \max_i r_i$ connectivity
requirement (see \cite{ChenLLZ22} for more precise bounds).

\paragraph{Oblivious tree rounding:} The rounding algorithm for Set Connectivity
in trees given in \cite{ChekuriEGS11} establishes a poly-logarithmic
integrality gap, however, the rounding is not \emph{oblivious} to the
pairs. In \cite{CGL15} a randomized oblivious algorithm based on the
group Steiner tree rounding from \cite{GargKR98} is described. This is
useful since the sets to be connected during the course of their
algorithm are implicitly generated. We encapsulate their result in the
following lemma. The tree rounding algorithm in \cite{CGL15,ChenLLZ22}
is phrased slightly differently since they combine aspects of group
Steiner rounding and the congestion mapping that comes from \racke
trees. We separate these two explicitly to make the idea more
transparent. We refer to the algorithm from the lemma below as TreeRounding.

\begin{lemma}[\cite{CGL15,ChenLLZ22}]
\label{lem:setconnectivity-tree-rounding}
Consider an instance of Set Connectivity on an $n$-node tree $T=(V,E)$
with height $h$ and let $x: E \rightarrow [0,1]$. Suppose $A, B
\subseteq V$ are disjoint sets and suppose $K \subseteq E$ such that
$x$ restricted to $K$ supports a flow of $f \le 1$ between $A$ and
$B$. There is a randomized algorithm that is oblivious to $A, B, K$
(hence depends only on $x$ and value $f$) that outputs a subset $E' \subseteq E$
such that (i) The probability that $E' \cap K$ connects $A$ to $B$ is
at least a fixed constant $\phi$ and (ii) For any edge $e \in E$, the
probability that $e \in E'$ is $\min\{1,O(\frac{1}{f} h \log^2 n) x(e)\}$.
\end{lemma}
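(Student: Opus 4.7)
The plan is to use the top-down randomized rounding of Garg, Konjevod and Ravi \cite{GargKR98}, scaled up by a factor of $1/f$, and then boosted by independent repetition until the single-round success probability becomes a constant. Since the resulting procedure consults only $x$ and $f$, it is automatically oblivious to the particular triple $(A,B,K)$.

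First I would root $T$ at an arbitrary vertex and, without loss of generality, monotonize $x$ along each root-to-leaf path (replacing $x(e)$ by the minimum $x$-value along the root-to-$e$ path, which only decreases $x$ and hence only strengthens the flow hypothesis). In a single round, include each root edge $e$ independently with probability $\min\{1,x(e)/f\}$, and for each non-root edge $e$ with parent edge $p(e)$, include $e$ conditioned on $p(e)$ having been included with probability $\min\{1,x(e)/x(p(e))\}$. A standard telescoping calculation shows that the per-round marginal probability of including $e$ equals $\min\{1,x(e)/f\}$. Taking $N = \Theta(h \log^2 n)$ independent rounds and letting $E'$ be the union, the union bound gives marginal probability at most $\min\{1,\,O(h \log^2 n /f)\cdot x(e)\}$, which is exactly property~(ii).

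For property~(i) it suffices to show that a single round connects $A$ to $B$ through $K$-edges with probability $\Omega(1/(h \log^2 n))$; then the $N$ independent repetitions amplify this to a fixed constant $\phi$. To prove the per-round bound I would adapt the second-moment (Paley-Zygmund) analysis of \cite{GargKR98}, following the Set Connectivity extension in \cite{ChekuriEGS11,CGL15}. The flow hypothesis says that after scaling $x$ by $1/f$, every $A$-$B$ cut restricted to $K$ has capacity at least $1$; decomposing the resulting unit flow into weighted $A$-to-$B$ paths, one defines a random variable $Y$ that counts (with path weights) those paths lying entirely inside $K \cap E'$. The top-down rounding gives $\E[Y] \ge \Omega(1)$, while the tree structure bounds pairwise correlations between path indicators via their least common ancestor edge, yielding a second-moment bound of the form $\E[Y^2] \le O(h \log^2 n) \cdot \E[Y]$. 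The Paley-Zygmund inequality then gives $\Pr[Y \ge 1] \ge \Omega(\E[Y]^2/\E[Y^2]) = \Omega(1/(h \log^2 n))$, and $Y \ge 1$ witnesses an $A$-$B$ path inside $K \cap E'$.

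The main obstacle is the second-moment bound in the Set Connectivity setting rather than in the simpler group Steiner tree setting: the $A$-to-$B$ paths arising from the flow decomposition have varying ``top'' edges (i.e.\ least common ancestors of their endpoints), and the correlation between two such paths must be controlled by how deep their common prefix lies in the tree. Carrying out this correlation bound while keeping the overhead at $h \log^2 n$ is the technical heart of the lemma and has already been done in \cite{CGL15,ChenLLZ22}; the outer boosting and marginal accounting above are then routine.
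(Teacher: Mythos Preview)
The paper does not give its own proof of this lemma: it is stated as a black-box result imported from \cite{CGL15,ChenLLZ22}, so there is nothing in the paper to compare your argument against. Your sketch is in fact the standard approach underlying those references --- GKR top-down rounding scaled by $1/f$, boosted by $\Theta(h\log^2 n)$ independent repetitions, with the per-round success probability established by a second-moment (Paley--Zygmund) argument over a flow decomposition --- and it is the right outline.

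Two small points worth tightening. First, your remark that monotonizing $x$ ``only decreases $x$ and hence only strengthens the flow hypothesis'' is phrased backwards: decreasing capacities cannot \emph{strengthen} a flow lower bound. The correct justification is that in a rooted tree, replacing $x(e)$ by the minimum along the root-to-$e$ path does not change the min-cut value between the root and any leaf (the bottleneck on each root-to-leaf path is unchanged), and an $A$--$B$ path decomposes into two such root-to-leaf segments meeting at the LCA; this is what makes monotonization harmless here and requires a sentence of care in the Set Connectivity setting. Second, since $Y$ is a \emph{weighted} count of surviving paths, the event you want is $\{Y>0\}$ rather than $\{Y\ge 1\}$; Paley--Zygmund gives $\Pr[Y>0]\ge \E[Y]^2/\E[Y^2]$, which is the inequality you actually need.
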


\section{Rounding Algorithm for the Augmentation Problem}
\label{sec:algo}

\newcommand{\LG}{\textnormal{LARGE}}
\newcommand{\SM}{\textnormal{SMALL}}
\newcommand{\lpopt}{\text{OPT}_{\text{LP}}} We adapt the algorithm and
analysis in~\cite{ChenLLZ22} to Flex-SNDP.  Let $\beta$ be the
expected congestion given by Theorem~\ref{thm:racke}.  Consider an
instance of $(p,q+1)$-Flex-SNDP specified by a graph $G$ and a set of
pairs $(s_i,t_i)$, $i \in [k]$. Assume that we have a partial solution
$H$ in which each $(s_i,t_i)$ is $(p,q)$-flex-connected. We augment
$H$ to ensure that each $(s_i,t_i)$ is $(p,q+1)$-flex-connected.

We start by obtaining a solution $\{x_e\}_{e \in E \setminus H}$ for
the LP relaxation to the augmentation problem, {\bf Augment-LP},
described in Section~\ref{sec:prelim}. Let $E' = E \setminus H$.  We
define $\LG = \{e \in E': x_e \geq \frac 1 {4(p+q)\beta}\}$, and
$\textnormal{SMALL} = \{e \in E: x_e < \frac 1 {4(p+q)\beta}\}$.  The
LP has paid for each $e \in \LG$ a cost of $c(e)/(4(p+q)\beta)$, hence
adding all of them to $H$ will cost $O((p+q)\beta \cdot \lpopt)$.  If
$\LG \cup H$ is a feasible solution to the augmentation problem, then
we are done since we obtain a solution of cost $O((p+q)\log n \cdot
\lpopt)$.  Thus, the interesting case is when $\LG \cup H$ is \emph{not} a
feasible solution.  In effect, we can assume that $\LG =
\emptyset$. One can assume this situation without loss of generality
by creating sufficiently many parallel edges for each $e$ and
splitting the $x(e)$ amongst them.

Following \cite{ChenLLZ22} we employ a \racke tree based rounding.
A crucial step is to set up a capacitated graph appropriately.  We can
assume, with a negligible increase in the fractional cost, that for
each edge $e \in E \setminus H$, $x(e) = 0$ or $x(e) \ge
\frac{1}{n^3}$; this can be ensured by rounding down to $0$ the
fractional value of any edge with very small value, and compensating
for this loss by scaling up the fractional value of the other edges by
a factor of $(1+1/n)$. It is easy to check that the new solution
satisfies the cut covering constraints, and we have only increased the
cost of the fractional solution by a $(1+1/n)$-factor. In the
subsequent steps we can ignore edges with $x_e = 0$ and assume that
there are no such edges.

Consider the original graph $G=(V,E)$ where we set a capacity for each
$e \in E$ as follows. If $e \in \LG \cup H$ we set $\tilde x_e = \frac
1 {4(p+q)\beta}$. Otherwise we set $\tilde x_e = x_e$.  Since the
ratio of the largest to smallest capacity is $O(n^3)$, the height of
any \racke tree for $G$ with capacities $\tilde x$ is at most
$O(\log n)$.  Then, we repeatedly sample \racke trees. For each
tree, we sample edges by the rounding algorithm given by Chalermsook
et al in~\cite{CGL15} (see Section~\ref{sec:prelim} for details).  A
formal description of the algorithm is provided below where $t'$ and
$t$ are two parameters that control the number of trees sampled and
the number of times we run the tree rounding algorithm in each sampled
tree.  We will analyze the algorithm by setting both $t$ and $t'$ to $\Theta((p+q)\log n)$.

\begin{algorithm}[H]
\caption{Augmentation from $(p,q)$ to $(p, q+1)$-flex-connectivity}
\label{augmentation_algo}
\begin{algorithmic}
    \State $H \gets$ partial solution satisfying $(p,q)$-flex-connectivity for given instance
    \State $\{x\}_{e \in E} \gets$ fractional solution to \textbf{Augment-LP}
    \State $\textnormal{LARGE} \gets \{e \in E: x_e \geq \frac 1 {4(p+q)\beta}\}$
    \State $\textnormal{SMALL} \gets \{e \in E: x_e < \frac 1 {4(p+q)\beta}\}$
    \State $H \gets H \cup \textnormal{LARGE}$
    \If{$H$ is a feasible solution to $(p, q+1)$-Flex-SNDP}
        \Return $H$
    \Else 
        \State $\tilde x_e \gets \begin{cases}
            \frac 1 {4(p+q)\beta} & e \in H \\
            0 & x_e < \frac 1 {n^3} \\
            x_e & \text{otherwise}
            \end{cases}$
    \EndIf
    \State $\calD \gets $ \racke tree distribution for $(G, \tilde x)$
    \For{$i = 1, \dots t'$}
        \State Sample a tree $(\calT, \calM, y) \sim \calD$
        \For{$j = 1, \dots, t$}
            \State $K \gets $ output of oblivious TreeRounding algorithm on $(G, \calT)$
            \State $H \gets H \cup \calM(K)$
        \EndFor
    \EndFor \\
    \Return H
\end{algorithmic}
\end{algorithm}

\section{Analysis}
\label{sec:analysis}
We will assume, following earlier discussion, that $\LG = \emptyset$
and focus on the case when the algorithm proceeds to the TreeRounding
step. Let $H$ denote the set of edges that satisfies
$(p,q)$-flex-connectivity for the given pairs. {\bf Augment-LP} is a
cut covering LP. Consider any violated cut $S$ with respect to $H$;
$S$ is violated because $S$ separates a pair $(s_i,t_i)$ and
$\delta_H(S)$ has exactly $(p+q)$ edges, of which at most $p-1$ are
safe. Let $F= \delta_H(S)$. We call $F$ a violating edge set. There
are at most $\binom {|H|} {p+q}$ violating edge sets, and since $|H|
\le n^2$, this is upper bounded by $O(n^{2(p+q)})$.
We say that a set of edges $H' \subseteq E \setminus H$ is a feasible
augmentation for violating edge set $F$ if for each pair $(s_i,t_i)$,
there is a path from $s_i$ to $t_i$ in the graph $(H\cup H')\setminus
F$. The following is a simple observation.

\begin{claim}
$H' \subseteq E \setminus H$ is a feasible solution to the
  augmentation problem iff for each violating edge set $F$, $H'$ is a
  feasible augmentation for $F$.
\end{claim}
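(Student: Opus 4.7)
The plan is to argue both directions by contradiction, comparing the cuts $\delta_H(S)$ and $\delta_{H\cup H'}(S)$ and exploiting the tight parameters of a violating edge set: $|F| = p+q$, $|F \cap \calS| \leq p-1$, and hence $|F \cap \calU| \geq q+1$. The key reformulation I would use throughout is that $H$ is $(p,q)$-flex-connected for a terminal pair iff every cut $S$ separating that pair satisfies either $|\delta_H(S) \cap \calS| \geq p$ or $|\delta_H(S)| \geq p+q$; this is immediate from the definition by choosing unsafe edges to delete greedily from the cut. An analogous characterization holds for $(p,q+1)$-flex-connectivity with $p+q+1$ in place of $p+q$.

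For the forward direction, I would fix a violating edge set $F = \delta_H(S)$ and, for contradiction, assume that some pair $(s_j,t_j)$ is disconnected in $(H\cup H')\setminus F$. A cut $S'$ witnessing this disconnection satisfies $\delta_H(S') \subseteq \delta_{H\cup H'}(S') \subseteq F$, so $\delta_H(S')$ inherits the bound of at most $p-1$ safe edges. Since $H$ is $(p,q)$-flex-connected, the above characterization forces $|\delta_H(S')| \geq p+q$; combined with $\delta_H(S') \subseteq F$ and $|F| = p+q$, this pins down $\delta_H(S') = \delta_{H\cup H'}(S') = F$. Then removing any $q+1$ of the (at least $q+1$) unsafe edges of $F$ leaves the cut $S'$ with only $p-1$ edges in $H\cup H'$, contradicting the $(p,q+1)$-flex-connectivity for $(s_j,t_j)$.

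For the reverse direction, I would assume $H'$ is a feasible augmentation for every violating edge set, and suppose for contradiction that $H\cup H'$ is not $(p,q+1)$-flex-connected. Applying the characterization to $H\cup H'$ produces a pair $(s_i,t_i)$ and a separating cut $S$ with $|\delta_{H\cup H'}(S)\cap \calS| \leq p-1$ and $|\delta_{H\cup H'}(S)| \leq p+q$. The inclusion $\delta_H(S) \subseteq \delta_{H\cup H'}(S)$ carries both bounds over to $H$, and the $(p,q)$-flex-connectivity of $H$ then forces $|\delta_H(S)| \geq p+q$, yielding equality and $\delta_{H'}(S) = \emptyset$. Thus $S$ is a violated cut, $F := \delta_H(S)$ is a violating edge set, and $\delta_{(H\cup H')\setminus F}(S) = \emptyset$ disconnects $s_i$ from $t_i$ in $(H\cup H')\setminus F$, contradicting feasibility of $H'$ for $F$.

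I do not expect a substantive obstacle; the claim is essentially bookkeeping around the definitions. The one subtlety to guard against is that $(p,q)$-flex-connectivity only permits deleting \emph{unsafe} edges, so the dichotomy ``$\geq p$ safe or $\geq p+q$ total'' must be applied consistently when passing between the $H$-view and the $(H\cup H')$-view of a cut.
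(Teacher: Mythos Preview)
Your proof is correct. The paper does not actually prove this claim; it is introduced as ``a simple observation'' and stated without argument, so there is no paper proof to compare against. Your write-up supplies exactly the bookkeeping the paper elides: in each direction you locate a separating cut, use the dichotomy ``$\geq p$ safe edges or $\geq p+q$ total'' that characterizes $(p,q)$-flex-connectivity of $H$, and pin down $\delta_H(S') = F$ (forward) or $\delta_{H'}(S) = \emptyset$ (reverse) from the tight cardinality $|F| = p+q$. The only cosmetic remark is that in the forward direction you could shortcut the final contradiction: once you have $\delta_{H'}(S') = \emptyset$ and $S'$ is itself a violated cut (it separates $(s_j,t_j)$, has exactly $p+q$ edges in $H$, and at most $p-1$ of them safe), you directly contradict that $H'$ covers all violated cuts, without re-deriving $(p,q+1)$-flex-connectivity failure via deleting $q+1$ unsafe edges. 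But your route is equally valid.
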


The preceding observations allows us to focus on a fixed violating
edge set $F$, and ensuring that the algorithm outputs a set $H'$ that
is a feasible augmentation for $F$ with high probability. We observe
that the algorithm is oblivious to $F$. Thus, if we obtain a high
probability bound for a fixed $F$, since there are $O(n^{2(p+q)})$
violating edge sets, we can use the union bound to argue that $H'$ is
feasible solution for \emph{all} violating edge sets.  For the
remainder of this section, until we do the final cost analysis, we
work with a fixed violating edge set $F$.

Consider a tree $(\calT,\calM, y)$ in the \racke distribution for the
graph $G$ with capacities $\tilde x$. We let $\calM^{-1}(F)$ denote
the set of all tree edges corresponding to edges in $F$,
i.e. $\calM^{-1}(F) = \cup_{e \in F} \calM^{-1}(e)$.  We call $(\calT,
\calM, y)$ \emph{good} with respect to $F$ if $y(\calM^{-1}(F)) \leq
\frac 1 2$; equivalently, $F$ blocks a flow of at most $\frac 1 2$ in
$\calT$.

\begin{lemma}
\label{lemma:good_tree}
For a violating edge set $F$, a randomly sampled R\"{a}cke tree $(\calT, \calM, y)$ is good with respect to $F$ with probability at least $\frac 1 2$.
\end{lemma}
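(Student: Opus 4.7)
The plan is to bound $\E[y(\calM^{-1}(F))]$ directly and then apply Markov's inequality. First, I would unfold the definition of $\calM^{-1}(F)$: by definition $\calM^{-1}(F) = \bigcup_{e \in F} \calM^{-1}(e)$, so
\[
y(\calM^{-1}(F)) \;=\; \sum_{f \in \calM^{-1}(F)} y(f) \;\leq\; \sum_{e \in F} \sum_{f \in E(\calT):\, e \in \calM_2(f)} y(f) \;=\; \sum_{e \in F} \load(e).
\]
The inequality is because a tree edge $f$ with $\calM_2(f) \cap F \neq \emptyset$ could be counted multiple times on the right-hand side.

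Next I would rewrite $\load(e) = \rload(e)\cdot \tilde x(e)$ and apply \racke's guarantee (Theorem~\ref{thm:racke}) edgewise: for every $e \in E(G)$, $\E_{(\calT,\calM)\sim\calD}[\rload(e)] \le \beta$. Thus
\[
\E\bigl[y(\calM^{-1}(F))\bigr] \;\leq\; \sum_{e \in F} \tilde x(e)\, \E[\rload(e)] \;\leq\; \beta \sum_{e \in F} \tilde x(e).
\]
Now I would use the structural facts about $F$. Since $F = \delta_H(S)$, every edge of $F$ lies in $H$, so by the definition of $\tilde x$ (and the reduction that $\LG = \emptyset$), each $e \in F$ has $\tilde x(e) = \frac{1}{4(p+q)\beta}$. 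Moreover, $|F| = p+q$ because $S$ is a violated cut with exactly $p+q$ edges in $H$. Plugging in,
\[
\E\bigl[y(\calM^{-1}(F))\bigr] \;\leq\; \beta \cdot (p+q) \cdot \frac{1}{4(p+q)\beta} \;=\; \frac{1}{4}.
\]

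Finally, Markov's inequality gives $\Pr\bigl[y(\calM^{-1}(F)) > \tfrac{1}{2}\bigr] \leq \tfrac{1/4}{1/2} = \tfrac{1}{2}$, so a sampled tree is good with probability at least $1/2$. There is no real obstacle; the only subtle point is recognizing that the capacity choice $\tilde x(e) = 1/(4(p+q)\beta)$ for $H$-edges was calibrated exactly so that the total $\tilde x$-capacity of any $(p+q)$-edge set from $H$ is $1/(4\beta)$, which cancels the $\beta$ from \racke's expected congestion bound and leaves a small enough constant to invoke Markov.
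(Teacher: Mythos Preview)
Your proof is correct and follows essentially the same approach as the paper: bound $\E[y(\calM^{-1}(F))]$ by $\sum_{e\in F}\E[\load(e)] \le \beta|F|\tilde x(e) = \tfrac14$ and apply Markov. Your inequality $y(\calM^{-1}(F)) \le \sum_{e\in F}\load(e)$ is in fact slightly more careful than the paper's equality, since a tree edge $f$ could in principle lie in $\calM^{-1}(e)$ for several $e\in F$.
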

\begin{proof}
For each $e \in F$, $\tilde x_e =\frac 1 {4(p+q)\beta}$. Since the expected congestion of each edge is at most $\beta$, $\E[\load(e)] \leq \beta \tilde x_e \le \frac 1 {4(p+q)}$ for each $e \in F$. Note that $y(\calM^{-1}(F)) = \sum_{e \in F} \load(e)$, hence by linearity of expectation, $\E[y(\calM^{-1}(F)] = \sum_{e \in F} \E[\load(e)] \leq |F|\frac 1 {4(p+q)} = \frac 1 4$. Applying Markov's inequality to $y(\calM^{-1}(F))$ proves the lemma.
\end{proof}

Given the preceding lemma, a natural approach is to sample a good tree $\calT$ and hope that $\calT \setminus M^{-1}(F)$ still has good flow between each terminal pair. However, since we rounded down all edges in $\LG \cup H$, it is possible that $\calM^{-1}(F)$ contains an edge whose removal would disconnect a terminal pair in $\calT$, even if $\calT$ is good. See \cite{ChenLLZ22} for a more detailed discussion and example.

We note that our goal is to find a set of
edges $H' \subseteq E$ such that each $s_i$ to $t_i$ has a path in
$(H' \cup H) \setminus F$; these paths must exist in the original
graph, even if they do not exist in the tree. Therefore, instead of
looking directly at paths between $s_i$ and $t_i$ in $\calT$, we focus
on obtaining paths through components that are already connected in
$(V(G), H \setminus F)$. The rest of the argument is to show that
sufficiently many iterations of TreeRounding on any good tree $\calT$
for $F$ will yield a feasible set $H'$ for $F$.

\subsection{Shattered Components, Set Connectivity and Rounding}
Let $\Q_F$ be the set of connected components in the subgraph induced by $H \setminus F$. We use vertex subsets to denote components.
Let $\calT$ be a good tree for $F$. We say that a connected component $Q \in \Q_F$ is \emph{shattered} if it is disconnected in $\calT \setminus \calM^{-1}(F)$, else we call it \emph{intact}. For each $i \in [k]$, let $Q_{s_i} \in \Q_F$ be the component containing $s_i$, and $Q_{t_i} \in \Q_F$ be the component containing $t_i$. Note that $Q_{s_i}$ may be the same as $Q_{t_i}$ for some $i$, but if $F$ is a violating edge set then there is at least one $i$ such that $Q_{s_i} \neq Q_{t_i}$. Now, we define a Set Connectivity instance that is induced by $F$ and $\calT$. Consider two disjoint vertex subsets $A,B \subset V$.
We say that $(A,B)$ partitions the set of shattered components if each shattered component $Q$ is fully contained in $A$ or fully contained in $B$. 
Formally let 
$$Z_F = \{(A \cup Q_{s_i}, B \cup Q_{t_i}): (A, B)
\text{ partitions the shattered components}, i \in [k]\}.$$ In other
words, $Z_F$ is set of all partitions of shattered components that
separate some pair $(s_i,t_i)$.  Since the leaves of $\calT$ are in
one to one correspondence with $V$ we can view $Z_F$ as inducing a Set
Connectivity instance in $\calT$; technically we need to consider the
pairs $\{(\calM^{-1}(A),\calM^{-1}(B)) \mid (A,B) \in Z_F\}$; however,
for simplicity we conflate the leaves of $\calT$ with
$V$.  We claim that it suffices to find a feasible solution that
connects the pairs defined by $Z_F$ in the tree $\calT$.

\begin{claim}
\label{claim:shattered_suffices}
Let $E' \subseteq \calT \setminus \calM^{-1}(F)$. Suppose there exists a path in $E' \subseteq \calT \setminus \calM^{-1}(F)$ connecting $A$ to $B$ for all $(A, B) \in Z_F$. Then, there is an $s_i$-$t_i$ path for each $i \in [k]$ in $(\calM(E') \cup H) \setminus F$. 
\end{claim}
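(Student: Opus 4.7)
Fix $i \in [k]$; the goal is to exhibit an $s_i$--$t_i$ path inside $(\calM(E') \cup H) \setminus F$. Observe first that $\calM(E') \cap F = \emptyset$, because $E' \subseteq E(\calT) \setminus \calM^{-1}(F)$ means no tree edge in $E'$ is mapped to any edge of $F$; hence every edge I use lies in $E \setminus F$. Form the auxiliary multigraph $G^\star$ obtained from $G$ by contracting each component of $(V, H \setminus F)$ to a super-node and keeping only the edges of $\calM(E')$. Because any two vertices of a component $Q \in \Q_F$ are joined by a path in $H \setminus F$, it suffices to show that $Q_{s_i}$ and $Q_{t_i}$ belong to the same connected component of $G^\star$.

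Suppose, for contradiction, that they do not. Then there is a partition $\Q_F = \calU \uplus \calV$ with $Q_{s_i} \in \calU$, $Q_{t_i} \in \calV$, such that no edge of $\calM(E')$ has endpoints in components drawn from both $\calU$ and $\calV$. Let $U = \bigcup_{Q \in \calU} Q$ and $V = \bigcup_{Q \in \calV} Q$; since every vertex of $G$ lies in some component of $H \setminus F$, we have $U \uplus V = V(G)$. Restrict the partition to the shattered components: set
\[
A = \bigcup\{Q \;:\; Q \in \calU,\; Q \text{ shattered}\}, \qquad B = \bigcup\{Q \;:\; Q \in \calV,\; Q \text{ shattered}\}.
\]
Then $(A,B)$ partitions the shattered components, so $(A \cup Q_{s_i},\, B \cup Q_{t_i}) \in Z_F$. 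Moreover $A \cup Q_{s_i} \subseteq U$ and $B \cup Q_{t_i} \subseteq V$, so the two sides are disjoint subsets of $V(G)$.

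By the hypothesis of the claim, $E'$ contains a $\calT$-path $P$ from some leaf $a$ with $\calM_1(a) \in A \cup Q_{s_i}$ to some leaf $b$ with $\calM_1(b) \in B \cup Q_{t_i}$. Concatenating the $G$-paths $\calM_2(f)$ for $f$ along $P$ yields a walk $W$ in $G$ from $\calM_1(a) \in U$ to $\calM_1(b) \in V$ whose edges all lie in $\calM(E')$. Because $U \uplus V = V(G)$, the walk $W$ must contain an edge with one endpoint in $U$ and the other in $V$; this edge lies in $\calM(E')$ and crosses the cut $(U,V)$, contradicting the choice of $(\calU, \calV)$. Therefore $Q_{s_i}$ and $Q_{t_i}$ coincide in $G^\star$, which yields the desired $s_i$--$t_i$ path in $(\calM(E') \cup H) \setminus F$.

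The only subtle step is the construction of $(A,B)$: when $Q_{s_i}$ or $Q_{t_i}$ is itself shattered, it is already absorbed into $A$ or $B$, but the definition of $Z_F$ writes $A \cup Q_{s_i}$ regardless and the containments $A \cup Q_{s_i} \subseteq U$, $B \cup Q_{t_i} \subseteq V$ still hold. Beyond that, the argument is the standard contracted-graph/cut-crossing trick, so I do not anticipate a serious obstacle.
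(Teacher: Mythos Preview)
Your proof is correct and follows essentially the same approach as the paper's: both assume $s_i,t_i$ are disconnected in $(\calM(E')\cup H)\setminus F$, extract a cut $(U,V)$ (you via the contracted graph $G^\star$, the paper directly as a vertex cut $S$) that respects $\Q_F$, build the pair $(A\cup Q_{s_i},\,B\cup Q_{t_i})\in Z_F$ from the shattered components on each side, and then derive a contradiction from the guaranteed $E'$-path mapping to a $G$-walk that crosses the cut. One cosmetic point: you reuse the symbol $V$ for both $V(G)$ and the set $\bigcup_{Q\in\calV}Q$, which you may want to rename.
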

\begin{proof}
Let $E' \subseteq \calT \setminus \calM^{-1}(F)$ such that there is a path from $A$ to $B$ in $E'$ for each $(A, B) \in Z_F$. Assume for the sake of contradiction that $\exists i \in [k]$ such that $(s_i, t_i)$ are disconnected in $(\calM(E') \cup H) \setminus F$. Then, there must be some cut $S$ such that $\delta_{(\calM(E') \cup H) \setminus F}(S) = \emptyset$ and $|S \cap \{s_i, t_i\}| = 1$.

We observe that no component $Q \in \Q_F$ can cross $S$ since each $Q$
is connected in $H\setminus F$. Assume without loss of generality that
$s_i \in S$. Then, let $A = Q_{s_i} \cup \{Q \in \Q_F: Q \text{ is
  shattered }, Q \subseteq S\}$, and $B = Q_{t_i} \cup \{Q \in \Q_F: Q
\text{ is shattered }, Q \subseteq \overline S\}$. Clearly, $A
\subseteq S$, $B \subseteq \overline S$. Furthermore, $(A, B) \in
Z_F$. By
assumption, there is a path $P$ in $E'$ between $A$ and $B$. Since $E'
\cap \calM^{-1}(F) = \emptyset$, $\calM(E')$ cannot contain any edges
in $F$. Therefore, $\calM(P)$ contains a path that crosses $S$ which
implies that $|\delta_{\calM(E')}(S)| = |\delta_{\calM(E') \setminus
  F}(S)| \geq 1$, contradicting the assumption on $S$.
\end{proof}

We now argue that $(\calT, \calM, y)$ routes sufficient flow for each pair in $Z_F$ without using the edges in $\calM^{-1}(F)$; in other words $y$ is fractional solution (modulo a scaling factor) to the Set Connectivity instance $Z_F$ in the graph/forest $\calT \setminus \calM^{-1}(F)$. We can then appeal to TreeRounding lemma to argue that it will connect the pairs in $Z_F$ without using any edges in $F$.

\begin{lemma}
\label{lem:flowforeachpair}
Let $(A, B) \in Z_F$. Let
$S \subset V_{\calT}$ such that $A \subseteq S$ and $B \subseteq V_{\calT} \setminus S$. Then $y(\delta_{\calT \setminus \calM^{-1}(F)}(S)) \geq \frac 1 {4(p+q)\beta}$.
\end{lemma}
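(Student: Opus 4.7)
The plan is to split into two cases based on whether the cut $S$ separates the leaves of some intact component of $H\setminus F$ in the tree. The two main ingredients are (i) the LP constraint $\tilde x(\delta_{E\setminus H}(C))\ge 1$ for violated cuts $C$, and (ii) the operative goodness bound $y(\calM^{-1}(F))\le \tfrac12$ which is the standing hypothesis on $\calT$ in this section.

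\textbf{Case 1 (some intact $Q\in \Q_F$ is split by $S$).} Since $Q$ is intact, all of its leaves lie in a single connected subtree of $\calT\setminus\calM^{-1}(F)$, call it $T_Q$; since $S$ separates two of these leaves, there is an edge $f\in T_Q\cap\delta_\calT(S)$, and automatically $f\notin\calM^{-1}(F)$. Deleting $f$ from $\calT$ partitions $V(G)$ into $(A_f,V\setminus A_f)$, with vertices of $Q$ on both sides. Because $Q$ is connected in $H\setminus F$, some edge $e\in E[Q]\cap(H\setminus F)$ crosses $(A_f,V\setminus A_f)$, and $\tilde x_e=\tfrac{1}{4(p+q)\beta}$ since $e\in H$. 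Thus $y(\delta_{\calT\setminus\calM^{-1}(F)}(S))\ge y(f)\ge \tilde x_e = \tfrac{1}{4(p+q)\beta}$.

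\textbf{Case 2 (no intact component is split by $S$).} Let $S_G\subseteq V(G)$ denote the vertex set whose leaves lie in $S$. By the definition of $Z_F$ each shattered component is contained entirely in $A\subseteq S$ or $B\subseteq V_\calT\setminus S$, and by the Case 2 hypothesis each intact component is similarly contained on one side, so $S_G$ is a union of components of $H\setminus F$ and $\delta_H(S_G)\subseteq F$. If $|\delta_H(S_G)|<p+q$ then $(p,q)$-flex-connectivity of $H$ forces $|\delta_H(S_G)\cap\calS|\ge p$, contradicting $|F\cap\calS|<p$; hence $\delta_H(S_G)=F$. Since $Q_{s_i}\subseteq S_G$ and $Q_{t_i}\subseteq V\setminus S_G$, the cut $S_G$ is violated, so the LP yields $\tilde x(\delta_{E\setminus H}(S_G))\ge 1$ and therefore $\tilde x(\delta_G(S_G))\ge 1$. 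To transfer this to the tree, I route the multicommodity demand that sends $\tilde x_e$ units between the endpoints of each $e\in E(G)$ along the unique tree path in $\calT$; by the definition of $y$ the induced load on every edge $f\in E(\calT)$ is exactly $y(f)$, so the routing is feasible, and the total flow crossing $\delta_\calT(S)$ is precisely $\tilde x(\delta_G(S_G))$. Thus $y(\delta_\calT(S))\ge \tilde x(\delta_G(S_G))\ge 1$. Applying goodness, $y(\delta_\calT(S)\cap \calM^{-1}(F))\le y(\calM^{-1}(F))\le \tfrac12$, and subtracting gives $y(\delta_{\calT\setminus\calM^{-1}(F)}(S))\ge \tfrac12 \ge \tfrac{1}{4(p+q)\beta}$.

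\textbf{The main obstacle} I anticipate is the realization that a $G$-flow avoiding $F$ does not lift directly to a tree flow avoiding $\calM^{-1}(F)$, because tree-path routing is oblivious to which $G$-edges a given path uses; this is exactly why one cannot simply quote Corollary~\ref{cor:racketreeflow} on the restricted graphs. The case split bypasses this: the problematic tree cuts $S$ whose leaf-image $S_G$ is not a union of $H\setminus F$-components are precisely those that split some intact component, and Case~1 then supplies a concrete crossing tree edge outside $\calM^{-1}(F)$ whose associated $G$-cut is guaranteed to be large by the $\tilde x$-weight of an intra-component $H\setminus F$ edge.
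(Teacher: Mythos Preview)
Your proof is correct and follows essentially the same two-case split as the paper: Case~1 (an intact component crosses $S$) exhibits a single tree edge of capacity at least $\tfrac{1}{4(p+q)\beta}$ outside $\calM^{-1}(F)$, and Case~2 (no component crosses $S$) shows the leaf-cut $S_G$ is a violated cut, invokes the LP constraint, and subtracts the goodness bound. Your presentation is in fact slightly more careful than the paper's in two places---you explicitly argue $\delta_H(S_G)=F$ (the paper only notes $\delta_{H\setminus F}(S')=\emptyset$ and asserts $S'$ is violated) and you verify the Case-1 edge avoids $\calM^{-1}(F)$---and you replace the paper's appeals to Corollary~\ref{cor:racketreeflow} with equivalent direct arguments via the definition of $y(f)$.
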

\begin{proof}
Let $S$ be a vertex set of $\calT$ that separates $A$ from $B$. First,
suppose there exists a component $Q \in \Q_F$ such that $Q$ crosses
$S$, i.e. $S \cap Q \neq \emptyset$ and $\overline S \cap Q \neq
\emptyset$. Since $(A, B)$ partitions the set of shattered components,
$Q$ must be intact in $\calT$. Let $u$ be a leaf in $Q \cap S$ and $v$
be a leaf in $Q \cap \overline S$.  Since $Q$ is intact in $T$ the
unique path connecting $u$ to $v$ in $\calT$ crosses $S$ and let $e$
be an edge on this path that crosses $S$. It suffices to show that
$y(e) \ge \frac 1 {4(p+q)\beta}$. This follow from properties of the
\racke tree. Since $u$ and $v$ are connected in $G'$ with a path using
only edges in $\LG \cup H$ each of which has a capacity of $\frac 1
{4(p+q)\beta}$, $u$-$v$ maxflow in $G'$ is at least $\frac 1
{4(p+q)\beta}$. From Corollary~\ref{cor:racketreeflow},
for any tree $\calT$, the $u$-$v$
maxflow in $\calT$ with capacities $y$ must be at least $\frac 1
{4(p+q)\beta}$. This in particular implies that $y(e) \ge \frac 1
{4(p+q)\beta}$ for every edge $e$ on the unique path from $u$ to $v$
in $\calT$.

 We can now restrict attention to the case that no connected component
 of $\Q_F$ crosses $S$. Consider $S'$ be the set of leaves in $S$ and
 consider the cut $(S',V\setminus S')$ in $G$. It follows that
 $(S',V-S')$ partitions the connected components in $\Q_F$ and
 $\delta_{H-F}(S') = \emptyset$. Since $(A,B) \in Z_F$ there is a pair
 $(s_i,t_i)$ such that $Q_{s_i} \in S'$ and $Q_{t_i} \in V\setminus
 S'$. Thus $(S',V\setminus S')$ is a violated cut with $F$ as its
 witness. Since $x$ is a feasible solution to {\bf Augment-LP} it follows
 that $x(\delta_{E \setminus H}(S')) \ge 1$. Recall that we assumed
 that $\LG = \emptyset$, and hence all
 edges in $\delta_{E \setminus H)}(S')$ are in $\SM$.
 Therefore, $x(\delta_{E \setminus H}(S')) = \tilde x(\delta_{E \setminus H}(S')) \ge 1$.

 The \racke tree property guarantees that $y(\delta_{\calT}(S)) \ge \tilde x(\delta_{G'}(S')) \ge 1$ (via Corollary~\ref{cor:racketreeflow}). 
 We note that 
 $$y(\delta_{\calT \setminus \calM^{-1}(F)}(S)) \ge y(\delta_{\calT}(S)) - y(\calM^-(F)) \ge 1 - 1/2 \ge 1/2.$$
 where we used the fact that $y(\calM^-(F)) \le 1/2$ since $\calT$ is good for $F$.
 Thus in both cases we verify the desired bound.
\end{proof}

\paragraph{Bounding $Z_F$:} A second crucial property is a bound on $|Z_F|$,
the number of pairs in the Set Connectivity instance induced by $F$ and a good tree $\calT$ for $F$. 

\begin{lemma}
\label{lem:boundnumpairs}
For a good tree $\calT$, $|Z_F| \leq 2^{2(p+q)\beta} k$.
\end{lemma}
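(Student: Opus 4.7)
The plan is in two parts. First, show $|Z_F| \le k \cdot 2^s$, where $s$ is the number of shattered components in $\Q_F$. This is immediate from the definition of $Z_F$: each element is specified by a terminal index $i \in [k]$ together with a placement of each shattered component into either $A$ or $B$ (the intact components and the distinguished components $Q_{s_i}, Q_{t_i}$ are then determined by this data).

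The second and main step is to show $s \le 2(p+q)\beta$ via a charging argument against the capacity $y(\calM^{-1}(F)) \le 1/2$ guaranteed by goodness of $\calT$. I would first contract each connected component of $\calT \setminus \calM^{-1}(F)$ into a single super-vertex to obtain an auxiliary tree $\calT^*$ whose edges correspond bijectively to the edges of $\calM^{-1}(F)$ and inherit their $y$-capacities. For each shattered component $Q$, the leaves of $Q$ meet at least two super-vertices, so the Steiner subtree $T_Q$ of these super-vertices in $\calT^*$ satisfies $|E(T_Q)| \ge 1$. Hence
$s \;\le\; \sum_{Q \text{ shattered}} |E(T_Q)| \;=\; \sum_{e \in E(\calT^*)} |\{Q : e \in E(T_Q)\}|.$

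The key inequality is $|\{Q : e \in E(T_Q)\}| \le 4(p+q)\beta \cdot y(e)$ for every $e \in E(\calT^*)$. Removing $e$ induces a cut $(V_L,V_R)$ of $V(G)$ whose $\tilde x$-capacity equals $y(e)$ by the definition of the \racke capacity. If $e \in E(T_Q)$ then $Q$ has leaves in both $V_L$ and $V_R$; since $Q$ is connected in $H \setminus F$, $Q$ must contain at least one $H$-edge crossing this cut, and since distinct components of $\Q_F$ are vertex-disjoint these witnessing $H$-edges are disjoint across $Q$. Every $H$-edge has $\tilde x$-capacity $\frac{1}{4(p+q)\beta}$, so the cut can contain at most $4(p+q)\beta \cdot y(e)$ $H$-edges. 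Summing over $e \in \calM^{-1}(F)$ and applying $y(\calM^{-1}(F)) \le 1/2$ yields $s \le 2(p+q)\beta$; combining with the first step completes the proof.

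The main obstacle is choosing the right structural witness linking shatteredness to consumption of $y$-capacity: one must exploit the fact that every intra-component connection in $H$ has the uniform capacity $\frac{1}{4(p+q)\beta}$ and that distinct components of $\Q_F$ are vertex-disjoint, so their witnessing $H$-edges across a cut cannot collide. Once this is set up, the double-counting step is routine.
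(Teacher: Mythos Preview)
Your argument is correct. Both your proof and the paper's reach the same bound $s \le 2(p+q)\beta$ on the number of shattered components, and both then conclude $|Z_F| \le k\cdot 2^s$ by the same counting.

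The difference is in how the bound on $s$ is obtained. The paper picks, for each shattered component $Q_j$, a pair $(u_j,v_j)$ lying in distinct pieces of $\calT\setminus\calM^{-1}(F)$, sets $A=\{u_j\}$, $B=\{v_j\}$, and observes that the $\ell$ disjoint $u_j$-$v_j$ paths inside the components give $(A,B)$-maxflow at least $\ell/(4(p+q)\beta)$ in $(G,\tilde x)$, hence in $\calT$ by Corollary~\ref{cor:racketreeflow}; since removing $\calM^{-1}(F)$ destroys all $A$-$B$ flow and can drop it by at most $y(\calM^{-1}(F))\le 1/2$, one gets $\ell\le 2(p+q)\beta$. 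Your route is instead a direct edge-by-edge double count on the contracted tree $\calT^*$: each shattered component contributes at least one edge to its Steiner subtree, and for any fixed $e\in\calM^{-1}(F)$ the number of components whose Steiner subtree uses $e$ is bounded by the number of $H$-edges crossing the corresponding cut in $G$, hence by $4(p+q)\beta\,y(e)$. Summing and using goodness gives the same bound. Your argument is a bit more elementary in that it avoids invoking the flow corollary and works purely at the level of cut capacities; the paper's argument is a cleaner one-shot flow comparison. They are really dual views of the same phenomenon: the vertex-disjointness of the $\Q_F$ components is what drives both the $\ell$ disjoint paths in the paper's version and the distinctness of your witnessing $H$-edges across each cut.
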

\begin{proof}
Let $\ell$ be the number of shattered components and let them be
$Q_1,\ldots,Q_\ell$. For each $Q_i$ pick a pair of vertices $u_i,v_i$
that are in separate components of $\calT - \calM^{-1}(F)$. Let $A =
\{u_1,\ldots,u_\ell\}$ and $B = \{v_1,v_2,\ldots,v_\ell\}$. Since the
paths connecting $u_i,v_i$ are in different connected components of
$H\setminus F$, it follows that the $(A,B)$-maxflow in $H \setminus F$
is at least $\ell$. In the graph $G'$ obtained by scaling down the
capacity of edges of $H$, the maxflow is at least $\frac \ell
{4(p+q)\beta}$ which implies that it is at least this quantity in
$\calT$. Since $\calT$ is good, the total decrease of flow can be at
most $y(\calM^{-1}(F)) \le \frac 1 2$. By construction there is no
flow between $A$ and $B$ in $\calT - \calM^{-1}(F)$ which implies that
$\frac \ell {4(p+q)\beta} \le 1/2 \Rightarrow \ell \leq 2(p+q)\beta$.
Each pair in $Z_F$ corresponds to a subset of shattered components and
a demand pair $(s_i,t_i)$, and hence $|Z_F|\leq 2^\ell k \le
2^{2(p+q)\beta} k$.
\end{proof}

\subsection{Correctness and Cost}
Now we analyze the correctness and cost of the algorithms output.

\begin{lemma}
\label{claim:successprobforgoodtree}
Suppose $\calT$ is good for a violating edge set $F$. Then after $t$
rounds of TreeRounding with flow parameter $\frac 1 {4(p+q)\beta}$,
the probability that $H'$ is \emph{not} a feasible augmentation for
$F$ is at most $(1-\phi)^t |Z_F| \le 1/4$.
\end{lemma}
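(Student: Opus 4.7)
The plan is to combine the three ingredients already established---Claim~\ref{claim:shattered_suffices}, Lemma~\ref{lem:flowforeachpair}, and the oblivious TreeRounding guarantee of Lemma~\ref{lem:setconnectivity-tree-rounding}---together with a union bound over the Set Connectivity pairs in $Z_F$ whose cardinality is controlled by Lemma~\ref{lem:boundnumpairs}.

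First, I would fix a good tree $\calT$ for $F$ and set $K := E(\calT) \setminus \calM^{-1}(F)$. For each pair $(A,B) \in Z_F$, Lemma~\ref{lem:flowforeachpair} guarantees that for every $A$--$B$ separating set $S$ in $\calT$, the capacity $y$ restricted to $K$ satisfies $y(\delta_{\calT \setminus \calM^{-1}(F)}(S)) \geq \frac{1}{4(p+q)\beta}$. By the max-flow/min-cut theorem in trees this means that $y$ restricted to $K$ supports an $A$--$B$ flow of at least $f := \frac{1}{4(p+q)\beta}$. Therefore the preconditions of Lemma~\ref{lem:setconnectivity-tree-rounding} are satisfied \emph{simultaneously} for every pair $(A,B) \in Z_F$ (using the same edge set $K$), with the flow parameter $f$ fed into TreeRounding.

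Next, I would analyze a single round of TreeRounding. Since the algorithm is oblivious to $(A,B,K)$, Lemma~\ref{lem:setconnectivity-tree-rounding} yields that for each fixed $(A,B) \in Z_F$, the output $E' \subseteq E(\calT)$ satisfies $\Pr[E' \cap K \text{ connects } A \text{ to } B] \geq \phi$. Across the $t$ independent rounds (for a fixed $\calT$), the cumulative output edges $E'_{\text{cum}} = \bigcup_{j=1}^t E'_j$ have the property that $E'_{\text{cum}} \cap K$ connects $A$ to $B$ unless \emph{every} round fails, which happens with probability at most $(1-\phi)^t$. Taking a union bound over the at most $|Z_F|$ pairs, the probability that some pair in $Z_F$ fails to be connected in $E'_{\text{cum}} \cap K$ is at most $(1-\phi)^t |Z_F|$. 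Whenever no pair fails, Claim~\ref{claim:shattered_suffices} applied to $E'_{\text{cum}} \subseteq \calT \setminus \calM^{-1}(F)$ implies that $\calM(E'_{\text{cum}}) \cup H$ contains an $s_i$--$t_i$ path for every $i$ in the graph after deleting $F$, so $H' := \calM(E'_{\text{cum}})$ is a feasible augmentation for $F$.

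Finally, for the numerical bound $(1-\phi)^t |Z_F| \leq 1/4$, I would plug in Lemma~\ref{lem:boundnumpairs}: $|Z_F| \leq 2^{2(p+q)\beta} k \leq 2^{O((p+q)\log n)}$ using $\beta = O(\log n)$ and $k \leq n^2$. Choosing $t = \Theta((p+q)\log n)$ with a sufficiently large constant makes $(1-\phi)^t \leq e^{-\phi t}$ dominate this factor and gives the $\le 1/4$ bound. The only non-mechanical step is verifying that the single set $K = E(\calT) \setminus \calM^{-1}(F)$ simultaneously supports the required flow for every $(A,B) \in Z_F$, which is exactly what Lemma~\ref{lem:flowforeachpair} provides; the rest is independence of rounds and a union bound.
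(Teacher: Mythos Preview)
Your proof is correct and follows exactly the paper's approach: Lemma~\ref{lem:flowforeachpair} certifies the flow precondition for each pair in $Z_F$, Lemma~\ref{lem:setconnectivity-tree-rounding} gives the per-round success probability $\phi$, a union bound over the $|Z_F|$ pairs and $t$ independent rounds yields $(1-\phi)^t|Z_F|$, and Lemma~\ref{lem:boundnumpairs} together with $t=\Theta((p+q)\log n)$ gives the $1/4$ bound. One cosmetic slip: the cumulative output $E'_{\mathrm{cum}}$ need not lie in $\calT\setminus\calM^{-1}(F)$, but since you already establish connectivity via $E'_{\mathrm{cum}}\cap K$, Claim~\ref{claim:shattered_suffices} should be applied to that intersection (and $\calM(E'_{\mathrm{cum}}\cap K)\subseteq H'$), so the conclusion is unchanged.
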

\begin{proof}
  Suppose $\calT$ is good for $F$. Let $(A,B) \in Z_F$. From
  Lemma~\ref{lem:flowforeachpair} the flow for $(A,B)$ in $\calT -
  \calM^{-1}(F)$ is at least $\frac 1 {4(p+q)\beta}$. From
  Lemma~\ref{lem:setconnectivity-tree-rounding}, with probability at
  least $\phi$, the pair $(A,B)$ is connected via a path in $\calT -
  \calM^{-1}(F)$. If all pairs are connected, then via
  Claim~\ref{claim:shattered_suffices}, $H'$ is a feasible
  augmentation for $F$. Thus, $H'$ is not a feasible augmentation if
  for some $(A,B) \in Z_F$ the TreeRounding does not succeed after $t$
  rounds. The probability of this, via the union bound over the pairs
  in $Z_F$, is at most $(1-\phi)^t |Z_F|$. From
  Lemma~\ref{lem:boundnumpairs}, $|Z_F| \le 2^{2(p+q)\beta}k$. Consider $t
  = \frac 1 \phi \log(4k \cdot 2^{2\beta(p+q)}) = O((p+q)\log n)$, since $\beta = O(\log n)$. Then, $(1-\phi)^t |Z_F| = 2^{2(p+q)\beta} k (1 -
  \phi)^t \leq 2^{2(p+q)\beta} k e^{-\phi t} \leq \frac 1 4$.
\end{proof}

\begin{lemma}
\label{lem:correctness}
The algorithm outputs a solution $H'$ such that $H \cup H'$ is a
feasible augmentation to the given instance with probability at least
$\frac 1 2$.
\end{lemma}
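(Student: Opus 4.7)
The plan is to combine Lemma~\ref{lemma:good_tree}, Lemma~\ref{claim:successprobforgoodtree}, and a union bound over all violating edge sets. First, I invoke the earlier claim that $H \cup H'$ is a feasible augmentation for the instance iff $H'$ is a feasible augmentation for every violating edge set $F$, together with the crucial fact that the rounding algorithm is oblivious to $F$. This reduces the task to showing that for each fixed $F$, the random $H'$ produced by the algorithm fails to be a feasible augmentation for $F$ with probability at most $1/(2|\mathcal F|)$, where $\mathcal F$ denotes the family of all violating edge sets and $|\mathcal F| = O(n^{2(p+q)})$.

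Fixing $F$, I analyze a single outer iteration of Algorithm~\ref{augmentation_algo}. By Lemma~\ref{lemma:good_tree} the sampled \racke tree is good for $F$ with probability at least $1/2$; conditional on goodness, Lemma~\ref{claim:successprobforgoodtree} shows that the $t = \Theta((p+q)\log n)$ inner rounds of TreeRounding fail to produce a feasible augmentation for $F$ with probability at most $1/4$. Hence one outer iteration succeeds for $F$ with probability at least $(1/2)(3/4) = 3/8$, and since the outer iterations sample independent trees, after $t'$ outer iterations the probability that $F$ remains uncovered is at most $(5/8)^{t'}$. Monotonicity is in our favor: edges only accumulate in $H'$ across iterations, so once $F$ is covered it stays covered.

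To finish, I choose $t' = C(p+q)\log n$ for a sufficiently large absolute constant $C$, which forces $(5/8)^{t'} \le 1/(2 n^{2(p+q)})$. A union bound over the at most $O(n^{2(p+q)})$ violating edge sets then bounds the overall failure probability by $1/2$. The only non-routine point, and the reason an exponentially large union bound is affordable, is the obliviousness of TreeRounding: the same random output $H'$ serves every $F$ simultaneously, so polylogarithmically many trees and polylogarithmically many rounds per tree together drive the per-$F$ failure probability below the $1/|\mathcal F|$ threshold required by the union bound. The main obstacle would have been an $F$-dependent rounding, which would have forced $t'$ or $t$ to scale with $|\mathcal F|$; the framework of~\cite{ChenLLZ22} sidesteps this precisely by exposing Lemma~\ref{lem:setconnectivity-tree-rounding} as an oblivious procedure.
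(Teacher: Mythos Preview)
Your proof is correct and follows essentially the same approach as the paper: per-$F$ success probability at least $3/8$ in each outer iteration (via Lemma~\ref{lemma:good_tree} and Lemma~\ref{claim:successprobforgoodtree}), independence across the $t'$ outer iterations giving failure probability $(5/8)^{t'}$, and a union bound over the at most $n^{2(p+q)}$ violating edge sets with $t' = \Theta((p+q)\log n)$. Your added remarks on monotonicity and obliviousness are accurate and make explicit points the paper leaves implicit.
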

\begin{proof}
For a fixed $F$ the probability that a sampled tree is good is at
least $1/2$.  By Claim~\ref{claim:successprobforgoodtree}, conditioned
on the sampled tree being good for $F$, $t$ iterations of TreeRounding
fail to augment $F$ with probability at most $1/4$. Thus the probability that
all $t'$ iterations of sampling trees fail is $(1-3/8)^{t'}$. There
are at most $n^{2(p+q)}$ violating edge sets $F$. Consider $t' = \frac 8 3
\log (2n^{2(p+q)}) = O((p+q)\log n)$. By applying the union bound over
all violating edge sets $F$, the probability of the algorithm failing
is at most $n^{2(p+q)}(1 - 3/8)^{t'} \leq n^{2(p+q)}e^{-3t'/8} \leq
\frac 1 2$. Therefore, the output of the algorithm is a feasible
augmentation for all violating edge sets with probability at least
$\frac 1 2$.
\end{proof}

Now we analyze the expected cost of the edges output by the algorithm for augmentation
with respect to $\lpopt$, the cost of the fractional solution.

\begin{lemma}
\label{lem:costanalysis}
The total expected cost of the algorithm is $O((p+q)^3\log^7 n) \cdot \lpopt$.
\end{lemma}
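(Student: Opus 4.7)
The plan is to bound, for each edge $e \in E$, the probability that $e$ gets added to $H$ across all iterations, and then multiply by $c(e)$ and sum. The cost decomposes naturally into two parts: (i) the cost of edges in $\LG$ that are added unconditionally in the first phase of the algorithm, and (ii) the expected cost of edges added by the $t \cdot t' = O((p+q)^2 \log^2 n)$ rounds of TreeRounding across sampled trees. Part (i) is easy: every $e \in \LG$ satisfies $x_e \ge \frac{1}{4(p+q)\beta}$, so $c(\LG) \le 4(p+q)\beta \sum_{e \in \LG} c(e) x_e \le O((p+q)\log n) \lpopt$, which is dominated by the bound we need.

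For part (ii), the main strategy is to track a single edge $e \in E \setminus H$ through a single round of TreeRounding on a single sampled tree, then multiply by the total number of rounds. For a fixed sampled tree $(\calT, \calM, y)$, an edge $e \in E(G)$ is added to $\calM(K)$ iff some tree edge $f \in \calM^{-1}(e)$ is chosen by the rounding. By Lemma~\ref{lem:setconnectivity-tree-rounding}, with flow parameter $f^* = \frac{1}{4(p+q)\beta}$ and tree height $h = O(\log n)$ (by the capacity-ratio argument already made), each $f \in E(\calT)$ is chosen with probability at most $O\bigl(\tfrac{1}{f^*} \cdot h \cdot \log^2 n\bigr) y(f) = O((p+q) \log^4 n)\, y(f)$. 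Applying a union bound over $\calM^{-1}(e)$,
\[
\Pr[e \in \calM(K) \mid \calT] \;\le\; O((p+q)\log^4 n) \sum_{f \in \calM^{-1}(e)} y(f) \;=\; O((p+q)\log^4 n)\cdot \load(e).
\]

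Next I take the expectation over the choice of $\calT \sim \calD$. By the defining property of the \racke distribution (Theorem~\ref{thm:racke}), $\E_\calT[\load(e)] \le \beta\, \tilde x_e = O(\log n)\, \tilde x_e$. Multiplying through and then applying a union bound over the $t \cdot t' = O((p+q)^2 \log^2 n)$ rounds gives
\[
\Pr[e \text{ ever added by TreeRounding}] \;\le\; O\bigl((p+q)^3 \log^7 n\bigr)\, \tilde x_e.
\]
For $e \in E \setminus (H \cup \LG)$, $\tilde x_e \le x_e$, so summing $c(e)\cdot \Pr[e \text{ added}]$ over such $e$ yields $O((p+q)^3 \log^7 n) \sum_e c(e) x_e = O((p+q)^3 \log^7 n)\lpopt$. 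Edges in $H$ contribute nothing new since they are already paid for, and edges in $\LG$ were accounted for in part (i). Combining both parts proves the claimed bound. The main technical care is in making sure the TreeRounding probability bound from Lemma~\ref{lem:setconnectivity-tree-rounding} is applied with the correct flow parameter ($\tfrac{1}{4(p+q)\beta}$, matching the flow guarantee of Lemma~\ref{lem:flowforeachpair}) and that the height bound $h = O(\log n)$ legitimately holds because capacities were truncated away from zero to make the ratio $C$ polynomially bounded; once these parameters are set correctly, everything is a direct application of linearity of expectation and the union bound.
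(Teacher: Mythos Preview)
Your proposal is correct and follows essentially the same approach as the paper: bound the probability an edge $e$ is picked in a single TreeRounding call via Lemma~\ref{lem:setconnectivity-tree-rounding} with flow parameter $\frac{1}{4(p+q)\beta}$ and height $h=O(\log n)$, pass to $\load(e)$ by summing over $\calM^{-1}(e)$, take expectation over $\calT$ using the \racke guarantee $\E[\load(e)]\le \beta\,\tilde x_e$, and multiply by the $t\cdot t'=O((p+q)^2\log^2 n)$ rounds. The only cosmetic difference is that you explicitly separate out the $\LG$ contribution, whereas the paper's proof of this lemma restricts attention to $e\in\SM$ since $\LG$ was already accounted for earlier.
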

\begin{proof}
Fix an edge $e \in \SM$ with fractional value $x_e$. Consider one
outer iteration of the algorithm in which it picks a random tree
$\calT$ from the \racke tree distribution and then runs $t$ iterations
of TreeRounding with flow parameter $\alpha = \frac 1
{4(p+q)\beta}$. Via Lemma~\ref{lem:setconnectivity-tree-rounding}, the
probability of an edge $f \in \calT$ being chosen is at most
$O(\frac{1}{\alpha} h \log^2n) y(f)$. Thus the expected cost for $e$
for one round of TreeRounding is $O(\frac{1}{\alpha} h \log^2n)
\sum_{f \in \calM^{-1}(e)} y(f) = O(\frac{1}{\alpha} h \log^2n)
\load(e)$. By the \racke distribution property, $\E_{\calT} [\load(e)]
\le \beta x_e$. By linearity of expectation, since there are a total
of $t \cdot t'$ iterations of TreeRounding, the total expected cost is
at most ($t \cdot t')\cdot O(\frac{1}{\alpha} h \log^2 n \beta) \sum_{e \in
  E} c(e) x_e$. By the analysis in Section~\ref{sec:algo}, $h = O(\log
n)$, and $\beta = O(\log n)$. Substituting in the values of $t$ and $t'$ stated
in Lemmas \ref{claim:successprobforgoodtree} and \ref{lem:correctness}, the total expected cost is at most $O((p+q)^3 \log^7 n) \cdot \lpopt$.
\end{proof}

Combining the correctness and cost analysis we obtain the following.
\begin{theorem}
  There is a randomized $O((p+q)^3\log^7 n)$ approximation for the augmentation problem
  via {\bf Augment-LP}.
\end{theorem}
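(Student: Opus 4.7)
The plan is to assemble the theorem directly from Lemma~\ref{lem:correctness} and Lemma~\ref{lem:costanalysis}. Together these state that a single execution of the algorithm outputs an edge set $H'$ such that $H \cup H'$ is a feasible augmentation with probability at least $\tfrac{1}{2}$, while the expected cost of $H'$ is at most $O((p+q)^3 \log^7 n) \cdot \lpopt$. All the substantive arguments --- the good-tree event, the reformulation through shattered components and the Set Connectivity instance $Z_F$, the flow lower bound for each pair in $Z_F$, and the congestion accounting via the \raecke tree rounding --- have already been done.

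To turn this pair of guarantees into a single randomized approximation, I would first apply Markov's inequality to the cost random variable: with probability at least $\tfrac{3}{4}$ the output cost is at most four times its expectation. A union bound with the feasibility event then yields a single run that is simultaneously feasible and has cost $O((p+q)^3 \log^7 n) \cdot \lpopt$ with probability at least $\tfrac{1}{4}$.

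Finally I would wrap the algorithm in a standard Las Vegas outer loop. Feasibility of a candidate augmentation $H \cup H'$ can be verified in $n^{O(q)}$ time, as noted at the beginning of Section~\ref{sec:prelim}: enumerate the $n^{O(q)}$ choices of $q$ unsafe edges to remove and test $p$-edge-connectivity between all terminal pairs in what remains. Repeat the algorithm, rejecting any output that either fails the feasibility test or exceeds (say) four times the expected cost bound of Lemma~\ref{lem:costanalysis}, and stop at the first accepted output. The expected number of trials is $O(1)$, so the expected running time remains $n^{O(q)}$ and the expected cost of the accepted output is $O((p+q)^3 \log^7 n) \cdot \lpopt$. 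Since $\lpopt$ lower bounds the integer optimum of the augmentation problem, this establishes the claimed approximation ratio.

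The main obstacle is essentially conceptual rather than technical: it is just the standard coupling of a constant-probability feasibility guarantee with an expected cost bound via Markov and polynomial-time verification. I do not anticipate any subtle issues beyond confirming that the $n^{O(q)}$ feasibility oracle can be used inside the outer loop without damaging either the running time or the cost bound, both of which follow from the independence of successive trials.
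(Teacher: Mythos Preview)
Your proposal is correct and follows the same approach as the paper, which simply states that the theorem follows by ``combining the correctness and cost analysis'' (i.e., Lemmas~\ref{lem:correctness} and~\ref{lem:costanalysis}) without spelling out the Markov/union-bound/repeat wrapper. Your write-up is in fact more careful than the paper's one-line justification, and the extra detail about the $n^{O(q)}$ feasibility check and the Las Vegas loop is standard and sound.
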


Starting with a solution for $(p,0)$-flex-connectivity, and using $q$ augmentation iterations,
we obtain an $O(q(p+q)^3\log^7 n)$-approximate solution for the given instance of
$(p,q)$-Flex-SNDP, proving
Theorem~\ref{thm:main}.

\paragraph{Acknowledgements} We thank Qingyun Chen for
clarifications on a proof in \cite{ChenLLZ22}.

\bibliographystyle{plainurl}
\bibliography{fgc}

\end{document}